\documentclass[pra,aps,twocolumn,superscriptaddress,longbibliography]{revtex4-1}
\usepackage[colorlinks=true, citecolor=red, urlcolor=blue ]{hyperref}
\usepackage{graphicx}
\usepackage{bm}
\usepackage{amsmath,amsfonts}
\usepackage{amsthm}
\usepackage{hyperref}
\usepackage{xcolor}
\hypersetup{
    urlcolor=magenta,
    citecolor=blue
           }

\usepackage{braket}
\theoremstyle{plain}
\usepackage{color}
\usepackage{amssymb}
\usepackage{amsthm}
\usepackage{amsfonts}
\usepackage{float}
\usepackage{tabularx}
\usepackage{graphicx}

\usepackage{mathtools}
\usepackage{esvect}
\usepackage{wrapfig}
\usepackage{amsthm}
\usepackage{verbatim}
\usepackage{bbm}
\usepackage[normalem]{ulem}

\usepackage{enumitem}
\usepackage{fmtcount}
\usepackage{booktabs}
\usepackage{csquotes}
\usepackage{epsfig}

\usepackage{tabularx}
\usepackage{graphicx}
\usepackage{amsmath}
\usepackage{braket}
\usepackage{latexsym}
\usepackage{bm}
\usepackage{graphics,epstopdf}
\usepackage{enumitem}
\usepackage{fmtcount}
\usepackage{booktabs}
\usepackage{csquotes}
\usepackage{epsfig}

\theoremstyle{plain}

\def\bea{\begin{eqnarray}}
\def\eea{\end{eqnarray}}
\def\ba{\begin{array}}
\def\ea{\end{array}}

\def\beq{\begin{equation}}
\def\eeq{\end{equation}}

\usepackage[normalem]{ulem}
\usepackage{float}
\usepackage{graphicx}  
\usepackage{dcolumn}          
\usepackage{amssymb}
\usepackage{appendix}
\usepackage{physics}   
\usepackage{mathtools}
\usepackage{esvect}
\usepackage{wrapfig}
\usepackage{amsthm}
\usepackage{verbatim}
\usepackage{bbm}

\usepackage[mathscr]{euscript}
\def\Tr{\operatorname{Tr}}

\def\({\left(}
\def\){\right)}
\def\[{\left[}
\def\]{\right]}

\newcommand{\mc}[1]{\mathcal{#1}}



\newtheorem{theorem}{Theorem}

\newtheorem{criterion}{Criterion}
\newtheorem{definition}{Definition}

\newtheorem{lemma}{Lemma}

\begin{document}


\title{Exclusion principle for nonlocal advantage of quantum coherence}

\author{Priya Ghosh}
\affiliation{Harish-Chandra Research Institute,  A CI of Homi Bhabha National Institute, Chhatnag Road, Jhunsi, Prayagraj  211019, India}
\author{Mahasweta Pandit}
\affiliation{Institute of Theoretical Physics and Astrophysics, Faculty of Mathematics, Physics and Informatics, University of Gda\'nsk, 80-308 Gda\'nsk, Poland}
\affiliation{Departamento de Física, Universidad de Murcia, Murcia E-30071, Spain}
\author{Chirag Srivastava}
\affiliation{Harish-Chandra Research Institute,  A CI of Homi Bhabha National Institute, Chhatnag Road, Jhunsi, Prayagraj  211019, India}
\affiliation{Laboratoire d'Information Quantique, CP 225, Université libre de Bruxelles (ULB),
Av. F. D. Roosevelt 50, 1050 Bruxelles, Belgium}
\author{Ujjwal Sen}
\affiliation{Harish-Chandra Research Institute,  A CI of Homi Bhabha National Institute, Chhatnag Road, Jhunsi, Prayagraj  211019, India}

\begin{abstract}
Coherences in mutually unbiased bases of states of an isolated quantum system follow a complementarity relation. The nonlocal advantage of quantum coherence (NAQC), defined in a bipartite scenario, is a situation in which the average quantum coherences of the ensembles of one subsystem, effected by a measurement performed on the other subsystem, violates the complementarity relation. We analyze two criteria to detect NAQC for bipartite quantum states. We construct a more generalized version of the criterion to detect NAQC that is better than the standard criterion as it can capture more states exhibiting NAQC.  
We prove the local unitary invariance of these NAQC criteria.
~Further on, we focus on investigating the monogamy properties of NAQC in the tripartite scenario. We check for monogamy of NAQC from two perspectives, differentiated by whether or not the nodal observer in the monogamy relation performs the measurement for the nonlocal advantage. We find in particular that in the case where the nodal observer does not perform the measurement, a strong monogamy relation - an exclusion principle - is exhibited by NAQC.

\end{abstract}

\maketitle

\section{Introduction}
\label{sec1}
In recent years, quantum coherence (QC)~\cite{review-coherence-1} has emerged as a key notion of non-classicality. The underlying concept of QC is the wave-like nature of systems which allows two distinct pure states of a system to interfere coherently with each other, forming a quantum superposition. QC is crucial in areas such as quantum information theory~\cite{review-coherence-1}, metrology~\cite{Giovannetti_2011,Giorda18}, quantum biology~\cite{Plenio_2008,Rebentrost_2009,Lloyd_2011,Huelga_2013,alma9916559633502466}, and quantum thermodynamics~\cite{Rosario13,Lostaglio_2015,PhysRevX.5.021001,PhysRevE.92.042126}. In the literature, there exist different measures of QC, and the resource theory of QC has also been developed~\cite{Aberg06,l1-norm,Winter16,plenio-coherence,Bischof19,Srivastava21,sreetama-coherence,Banerjee21}. However, understanding about the ability to manipulate and utilize QC as a resource is far from complete, especially in the multipartite setting.\\
There exists a complementarity relation for coherences of an isolated qubit system in that the sum of quantum coherences of its states in mutually unbiased bases (MUBs)~\cite{Wootters89,vwani-somshubhro-mub,planat06,MUB-1} is non-trivially bounded from above~\cite{hall,pati}. 
The notion of nonlocal advantage of quantum coherence (NAQC) was introduced in the bipartite scenario to detect steerability, as captured by coherence~\cite{pati}. That is, the average coherence of the ensemble of states of one of the parties, affected by the measurement on the other party, was shown to violate the complementarity bound for quantum coherence of an isolated system.
While a state that shows NAQC is steerable, the converse is not always true, making the set of states exhibiting NAQC strictly smaller than the set of steerable states~\cite{debasis}.

Unlike its classical counterpart, non-classical correlations of multipartite systems have restricted sharability and are referred to as the monogamous property~\cite{ent-monogamy}, for the quantum state and the non-classical shared physical quantity considered.  
For research on monogamy of entanglement, see Refs.~\cite{ent-monogamy,mono-winter,mono-adesso,mono-osborne,mono-chen,mono-lee}, and for that of quantum discord, see~\cite{discord-monogamy,mono-discord-2}. See \cite{ent-review-1} and \cite{kavan-discord-review,discord-review-1} for reviews of entanglement and quantum discord, and ~\cite{sanders-ent-review,discord-monogamy-review-ujjwal} for their monogamy properties.

The current work focuses on investigating the monogamy properties of NAQC, as an integral part of characterizing it as a non-classical correlation. We consider two criteria of NAQC and refer to them as the standard and generalized NAQC criteria. The standard criterion includes measurements only in bases belonging to an arbitrary MUB on one of the subsystems of the bipartite system, while the generalized version includes measurements belonging to an arbitrary set of bases on a subsystem. Both criteria  involve optimizations over the relevant sets of projective measurements (MUBs for standard criterion and arbitrary set of bases for generalized criterion) and the set of MUBs for quantum coherence measurement, which makes both the NAQC functionals local unitary invariant. We show that the generalized NAQC criterion allows to capture more states that exhibit NAQC. We also provide a lower bound of both the NAQC functionals for any bipartite system.

Next, we check the monogamy properties of NAQC which is motivated by the following consideration. The phenomenon of NAQC and its quantifications utilize a single-site quantum characteristic, viz. quantum coherence, to create a two-body physical quantity, which exhibits certain ``nonlocal" aspects. Typical quantum correlations are often connected with ``nonlocality", and almost universally observed to exhibit certain monogamy properties~\cite{sanders-ent-review,discord-monogamy-review-ujjwal}.
With this motivation, we consider the monogamy properties of NAQC, for the set of pure three-qubit states in two separate scenarios.
The first case corresponds to the situation where  the  ensemble-generating measurements are performed on different subsystems of the tripartite system, while we switch parties to define the monogamy relation. The measurements are therefore performed on the ``non-nodal" parties in this case. We show that the generalized NAQC is strongly monogamous in this scenario. Indeed, there appears an ``exclusion principle'' in this case.
In the second scenario, the nodal party performs the ensemble-generating measurements.
The strong monogamy is no more valid here.

\section{Preliminaries}
\label{sec2}
We begin by introducing some of the key elements essential for analyzing the monogamy of NAQC. The central notion of this work is \emph{quantum coherence} (QC). QC, assuredly a basis-dependent notion, is the 
underlying concept of quantum entanglement and other quantum phenomena. It arises from coherent superposition of states of a quantum system. 
Several measures can be opted for measuring QC, such as relative entropy of quantum coherence~\cite{l1-norm}, $l_1$-norm of quantum coherence~\cite{l1-norm}, geometric quantum  coherence~\cite{coherence-streltsov}, etc.

For the current work, we will opt for the \emph{$l_1$-norm of quantum coherence}~\cite{l1-norm}, which, for
any quantum state $\rho$, is defined as the sum of absolute values of all non-diagonal elements of the state corresponding to a chosen basis $M$, i.e., 
~$C_{M}(\rho) \coloneqq \sum_{i\neq j} |\rho_{ij}|$,
where $\rho_{ij}$ represents the component of the state $\rho$ for the $i$-th row and $j$-th column, in the computational basis. The $l_1$-norm of coherence of $\rho$ in a given basis is zero if all the non-diagonal elements of the state in the corresponding basis are zero, and such states are termed \emph{incoherent states} for that basis, in the resource theory of coherence.

The subsequent key ingredient to define NAQC is the concept of \emph{mutually unbiased bases}~\cite{Wootters89,vwani-somshubhro-mub,planat06,MUB-1}. Let $\{\lbrace \ket{e_i^a} \rbrace^a\}_i$ represent a set of bases, where the subscript $i$ indicates different bases and $a$ indicates different elements of a basis. 
If for a set of orthonormal bases on the Hilbert space $\mc{H}$ with dimension $d$, the elements of any two bases satisfy the relation
$|\bra{e^a_i}e^b_j \rangle|^2 = \frac{1}{d}$,
for $i\neq j$ and all $a, b \in \lbrace 1, 2, ..., d \rbrace$, then this set of bases is said to form a set of MUBs. There can be at most three bases forming a set of MUBs in a two-dimensional complex Hilbert space. The set of eigenbases of $\sigma_x$, $\sigma_y$, and $\sigma_z$ (Pauli) matrices is an example.
We will exclusively be considering MUBs in the qubit Hilbert space.

At this point, we are ready to set forth the \emph{quantum coherence complementarity relation for a  single-qubit system}, presented in \cite{hall,pati}, which states that the
sum of quantum coherences for any qubit, over a set of three bases forming a set of mutually unbiased bases, has the following non-trivial upper bound: 
\begin{align}\label{saral}
    \sum_{i=1}^3 C_{M_i} \leq \sqrt{6}.
\end{align}
Here, $C_{M_i}$ denotes the value of quantum coherence of the corresponding state measured in the basis $M_i$, where $\{M_i,i=1,2,3\}$ forms a set of MUBs.  
This bound is non-trivial as the maximum value of the $l_1$-norm of coherence of a qubit over all  bases is unity.

This sets the stage for us to briefly recapitulate the notion of nonlocal advantage of quantum coherence. Consider two spatially separated parties, Alice and Bob, sharing a two-qubit state $\rho_{AB}$. Alice performs measurements on her part of the shared state, resulting in ensembles of states of Bob's subsystem. As Alice communicates her measurement outcomes to Bob, the average quantum coherence of the ensembles of states at Bob's end may violate the quantum coherence complementarity relation given in \eqref{saral}, 
~exhibiting a ``nonlocal advantage" of QC. Formally, NAQC can be defined as follows.
\begin{definition}[Nonlocal advantage of quantum coherence~\cite{pati}]
    Violation of the quantum coherence complementarity relation for average quantum coherences of one of the subsystems of a bipartite system by measurements on the other subsystem is defined as the nonlocal advantage of quantum coherence.
\end{definition}

 It has been established that only entangled bipartite states can exhibit NAQC~\cite{pati}, and indeed, non-steerable states cannot show NAQC, so the set of states providing NAQC is a subset of the set of steerable states~\cite{debasis}. For states on $\mathbb{C}^2 \otimes \mathbb{C}^2$ with diagonal correlation matrix, NAQC also captures stronger quantum correlation than ``Bell-nonlocality"~\cite{bell-naqc}. 

Lastly, we mention the standard form of the three-qubit pure state, that will be used in this work. In the computational basis, any three-qubit pure state  $\ket{\Psi}_{ABC}$, up to local unitaries, can be written as
\begin{eqnarray}
\label{eq-three-pure}
    \ket{\Psi}_{ABC} &=& \lambda_0 \ket{000} + \lambda_1 e^{i \beta} \ket{100} + \lambda_2 \ket{101} +\lambda_3 \ket{110}\nonumber \\ &~& +  \lambda_4 \ket{111},
\end{eqnarray}
where $\lambda_i \geq 0$, $\sum_{i=0}^{4} \lambda_i^2 = 1$, $\beta \in [0,\pi]$~
\cite{three-pure-1,three-pure-2,three-pure-3}.
Eq. \eqref{eq-three-pure} represents W-class states when $\lambda_4 = 0$ and $\beta = 0$.

\section{Detection of nonlocal advantage of quantum coherence}
\label{sec3}
In this section, we present and analyze two criteria  to detect NAQC. 
In the standard criterion to detect NAQC, Alice performs measurements only in the bases belonging to a set of MUBs, $\lbrace \lbrace \Lambda^a_i \rbrace^a \rbrace_i$. She then communicates her measurement settings and outcomes to Bob, creating three ensembles of states, $\mc{E}_i\equiv \{p(\rho_{B|\Lambda^a_i}),~\rho_{B|\Lambda^a_i}\}^a$, at Bob's end, where $p(\rho)$ is the probability of getting the state $\rho$, in the relevant measurement (at Alice). Now consider an independent set of MUBs, viz. $\{M_i\}_i$, and Bob measures the quantum coherence of the ensemble $\mc{E}_i$ in the basis $M_i$, so that the average coherence of ensemble $\mc{E}_i$ in the basis $M_i$ is
given by $\sum_a p(\rho_{B|\Lambda^a_i}) C_{M_i}(\rho_{B|\Lambda^a_i})$. Notice that the average of coherences is taken over the outcomes (at Alice) for a given measurement setting $i$ (at Bob). Finally, we sum these average quantum coherences over all values of $i$ to obtain $\sum_{i,a}  p(\rho_{B|\Lambda^a_i}) C_{M_i}(\rho_{B|\Lambda^a_i})$. This sum is to be compared with the  L.H.S. of the complementarity relation \eqref{saral}. This sum can be maximized over the choice of the sets of MUBs, $ \lbrace \Lambda_i  \rbrace_i \equiv \lbrace \lbrace \Lambda^a_i \rbrace^a \rbrace_i$ and $\{M_i\}_i$, to obtain
\begin{align}
    \label{eq-naqc}
    \mathcal{N}^\rightarrow (\rho_{AB}) \coloneqq \max_{\{M_i\}_i,\{\Lambda_i\}_i} \sum_{i,a}  p(\rho_{B|\Lambda^a_i}) C_{M_i}(\rho_{B|\Lambda^a_i}),
\end{align}
where the $\to$  indicates that the (ensemble-generating) measurement is done at party $A$ and quantum coherence is measured by party $B$.
We now formally state the standard criterion for detecting NAQC as follows.
\begin{criterion}\label{ghar1}[Standard criterion for NAQC detection]
Any two-qubit state $\rho_{AB}$ exhibits advantage in quantum coherence ``nonlocally" if
\begin{equation}\label{samna}
    \mathcal{N}^\rightarrow (\rho_{AB}) > \sqrt{6}.
\end{equation}
\end{criterion}
We refer to $ \mathcal{N}^\rightarrow (\rho_{AB})$ as an NAQC functional. The above NAQC detection criterion is only a sufficient criterion, and not a necessary one, since Alice performs a measurement only in the bases contained in a set of MUBs, $\lbrace \Lambda^a_i \rbrace^a$, to detect NAQC. 
For the criterion to become necessary and sufficient, one needs to perform optimization over all possible measurements on Alice's side.

Next, in order to capture a larger set of states exhibiting NAQC, we consider a generalized version of the standard NAQC criterion.
In the generalized criterion, Alice is not restricted to performing measurements only in bases forming a set of MUBs. That is, she may choose a set of arbitrary three bases which, in general, may not form a set of MUBs, for measurement on her subsystem. 

\begin{criterion}\label{ghar2}[Generalized criterion for NAQC detection]
Any two-qubit state $\rho_{AB}$ shows advantage in quantum coherence ``nonlocally" if
\begin{equation}
\label{eq-naqc-max}
    \mathsf{N}^\rightarrow (\rho_{AB}) \coloneqq  \max_{\{M_i\}_i,\{\Pi_i\}_i} \sum_{i,a}  p(\rho_{B|\Pi^a_i}) C_{M_i}(\rho_{B|\Pi^a_i})>\sqrt{6},
\end{equation}
where $\{\{\Pi^a_i\}^a\}_i$ are a set of three arbitrary projective measurements, while $\lbrace M_i \rbrace_i$ are an arbitrary set of MUBs.
\end{criterion}

This criterion is potentially also sufficient and not necessary since not all measurements are spanned on Alice's side. But nevertheless, by construction, it is no less strong than the standard criterion,
i.e.,
\begin{equation}
    \mathsf{N}^\rightarrow (\rho_{AB}) \geq \mathcal{N}^\rightarrow (\rho_{AB}).
\end{equation}
We will explicitly show below that the generalized criterion can detect the nonlocal advantage of quantum coherence  of states which are not detected by the standard criterion.

Let us now demonstrate a useful property of invariance of the functionals, $\mathcal{N}^\rightarrow (\rho_{AB})$ and $\mathsf{N}^\rightarrow (\rho_{AB})$, under the action of local unitaries on the state in their arguments.

\begin{theorem}
The NAQC functionals, $\mathcal{N}^\rightarrow (\rho_{AB})$ and $\mathsf{N}^\rightarrow (\rho_{AB})$,
are invariant under the action of local unitaries for any quantum state, $\rho_{AB}$.
\end{theorem}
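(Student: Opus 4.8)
The plan is to show that both NAQC functionals are built entirely out of quantities that are either invariant under local unitaries or whose variation is absorbed by the optimizations. Write $\rho_{AB}' = (U_A \otimes U_B)\, \rho_{AB}\, (U_A^\dagger \otimes U_B^\dagger)$ for arbitrary single-qubit unitaries $U_A, U_B$. I would first analyze the effect of $U_A$: if Alice measures in a basis $\{\Lambda_i^a\}^a$ on the transformed state, this produces exactly the same ensemble at Bob's end as measuring in the basis $\{U_A^\dagger \Lambda_i^a U_A\}^a$ on the original state, with identical outcome probabilities $p(\rho_{B|\Lambda_i^a})$. Since $U_A$ maps a set of MUBs to a set of MUBs (unitaries preserve the inner-product condition $|\langle e_i^a | e_j^b\rangle|^2 = 1/d$), and maps an arbitrary set of projective measurements to an arbitrary set of projective measurements, the optimization over $\{\Lambda_i\}_i$ (resp. $\{\Pi_i\}_i$) is over the same feasible set before and after the action of $U_A$. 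Hence the $U_A$ part drops out by a change of variables in the maximization.

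Next I would handle $U_B$. After Alice's measurement, each conditional state becomes $\rho_{B|\Lambda_i^a}' = U_B\, \rho_{B|\Lambda_i^a}\, U_B^\dagger$ (the probabilities are untouched). The key fact needed is that the $l_1$-norm of coherence transforms covariantly under a basis change: $C_M(U_B \sigma U_B^\dagger) = C_{U_B^\dagger M U_B}(\sigma)$ for any state $\sigma$, any basis $M$, and any unitary $U_B$. This follows because the matrix elements of $U_B \sigma U_B^\dagger$ in the basis $M$ equal the matrix elements of $\sigma$ in the rotated basis $U_B^\dagger M U_B$, so the off-diagonal absolute values — and their sum — coincide. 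Consequently, for a fixed choice of Alice's measurements, the inner sum $\sum_{i,a} p(\rho_{B|\Lambda_i^a}) C_{M_i}(\rho_{B|\Lambda_i^a}')$ equals $\sum_{i,a} p(\rho_{B|\Lambda_i^a}) C_{U_B^\dagger M_i U_B}(\rho_{B|\Lambda_i^a})$. Since $U_B$ maps the set of all sets of MUBs bijectively onto itself, maximizing over $\{M_i\}_i$ gives the same value whether we use $\{M_i\}_i$ or $\{U_B^\dagger M_i U_B\}_i$.

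Combining the two observations, I would conclude that for every feasible tuple in the optimization defining $\mathcal{N}^\rightarrow(\rho_{AB}')$ there is a feasible tuple for $\mathcal{N}^\rightarrow(\rho_{AB})$ achieving the same objective value, and vice versa; therefore the maxima agree, i.e. $\mathcal{N}^\rightarrow(\rho_{AB}') = \mathcal{N}^\rightarrow(\rho_{AB})$. The identical argument, with the optimization over MUBs $\{\Lambda_i\}_i$ replaced by the optimization over arbitrary projective triples $\{\Pi_i\}_i$, yields $\mathsf{N}^\rightarrow(\rho_{AB}') = \mathsf{N}^\rightarrow(\rho_{AB})$; no step used that Alice's bases were mutually unbiased, only that the feasible set is closed under conjugation by $U_A$, which holds for arbitrary projective measurements as well.

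The only genuinely delicate point — the ``main obstacle'' — is the covariance lemma $C_M(U \sigma U^\dagger) = C_{U^\dagger M U}(\sigma)$, together with the bookkeeping that conjugating every basis in a MUB set by a fixed unitary again gives a MUB set and that this induces a bijection on the relevant optimization domain; everything else is a routine change of variables. I would state and prove this covariance as a short preliminary observation and then the theorem follows immediately.
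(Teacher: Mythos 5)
Your proposal is correct and follows essentially the same route as the paper: the covariance $C_M(U\sigma U^\dagger)=C_{U^\dagger M U}(\sigma)$ of the $l_1$-norm of coherence, the fact that conjugation by a local unitary maps MUB sets to MUB sets (and arbitrary projective triples to the same), and a change of variables in both optimizations. The paper phrases the final step as two inequalities obtained from the forward and inverse unitaries rather than as a single bijection of feasible tuples, but this is the same argument.
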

\begin{proof}

Consider the transformation $\rho'_{AB} =(U \otimes V) \rho_{AB} (U^\dagger \otimes V^\dagger)$, where $U$ and $V$ are arbitrary unitary operators. 
We first want to show that $\mathcal{N}^\rightarrow (\rho_{AB})=\mathcal{N}^\rightarrow (\rho'_{AB})$.
Let $\{\{\Lambda_{i}^a\}^a\}_i$ 
and $\{M_i\}_i$ be sets of MUBs responsible for the maximization in \eqref{eq-naqc} for the state $\rho_{AB}$. Let $\rho_{B|\Lambda_i^a}$ be the conditional states at Bob's end, after Alice's measurements.

Now, consider the set of measurements, $\{\{\Lambda'^a_i\}^a\}_i=\{\{U \Lambda_i^a U^\dagger\}^a\}_i$, performed by Alice, and $\{M'_i\}_i=\{VM_iV^\dagger\}_i$, used by Bob on the state $\rho'_{AB}$ to obtain $\sum_{i,a}  p(\rho'_{B|\Lambda'^a_i}) C_{M'_i}(\rho'_{B|\Lambda'^a_i})$.
Note that the reduced states at Bob's end, created due to measurements,  $\{\{\Lambda'^a_i\}^a\}_i$, by Alice on $\rho'_{AB}$ are $V \rho_{B|\Lambda_i^a} V^\dagger$ and its occurrence probability, 
$p(V \rho_{B|\Lambda_i^a} V^\dagger)=p(\rho_{B|\Lambda_i^a})$.
Also note that 
\begin{equation}
    C_{M_i} (\chi) = C_{V M_i V^\dagger}( V \chi  V^\dagger),
\end{equation}
where $\chi$ is a single-qubit density matrix.
Thus, for the measurements, $\{\{\Lambda'^a_i\}^a\}_i$ and $\{M'_i\}_i$, which are sets of MUBs as their parents were so,
\begin{eqnarray}\label{zikr}
    \sum_{i,a}  p(\rho'_{B|\Lambda'^a_i}) C_{M'_i}(\rho'_{B|\Lambda'^a_i})&=&\mathcal{N}^\rightarrow (\rho_{AB}). \nonumber \\
    \implies \mathcal{N}^\rightarrow (\rho'_{AB}) &\geq& \mathcal{N}^\rightarrow (\rho_{AB}).
\end{eqnarray}
Now, $\rho_{AB}=(U^\dagger \otimes V^\dagger) \rho'_{AB} (U \otimes V)$, where $U^\dagger$ and $V^\dagger$ are also unitary, and thus a similar treatment as above leads to
\begin{equation} \label{zaika}
    \mathcal{N}^\rightarrow (\rho_{AB}) \geq \mathcal{N}^\rightarrow (\rho'_{AB}).
\end{equation}
Therefore, \eqref{zikr} and \eqref{zaika} prove that 
\begin{equation}\label{jagah}
    \mathcal{N}^\rightarrow (\rho'_{AB}) = \mathcal{N}^\rightarrow (\rho_{AB}),
\end{equation}
and thus we have proven invariance under local unitaries, of $\mathcal{N}^\rightarrow$.

Notice that the above proof can be repeated without requiring the measurements to belong to sets of MUBs, proving that the generalized NAQC functional, $\mathsf{N}^\rightarrow (\rho'_{AB})$, is also unaffected by local unitary action on the states.
\end{proof}



Now we present a lemma providing a lower bound of the quantity $\mathcal{N}^\rightarrow (\rho_{AB})$, and thus a lower bound of the quantity $\mathsf{N}^\rightarrow (\rho_{AB})$.
\begin{lemma}
\label{lemma-naqc-lb}
The NAQC functional of an arbitrary state $\rho_{AB}$ 
of two qubits is lower bounded by the sum of the quantum coherences of the reduced state, $\Tr_A[\rho_{AB}]$ of $\rho_{AB}$, measured in an arbitrary set of mutually unbiased bases.
\end{lemma}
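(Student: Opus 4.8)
The plan is to combine two facts: that the $l_1$-norm of coherence is convex, and that for any ensemble-generating measurement on Alice's side the resulting conditional states on Bob's side average back to the reduced state $\rho_B \coloneqq \Tr_A[\rho_{AB}]$.

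First I would record the ensemble-average identity. For each basis $\{\Lambda_i^a\}^a$ belonging to a set of MUBs on $A$, using that $\Lambda_i^a$ is a projector (so that $\Tr_A[(\Lambda_i^a\otimes\id)\rho_{AB}(\Lambda_i^a\otimes\id)] = \Tr_A[(\Lambda_i^a\otimes\id)\rho_{AB}]$ by cyclicity on the $A$ factor) together with $\sum_a \Lambda_i^a = \id$, one obtains $\sum_a p(\rho_{B|\Lambda_i^a})\,\rho_{B|\Lambda_i^a} = \rho_B$; i.e.\ the ensemble $\mc{E}_i$ averages to $\rho_B$ for every setting $i$. Next I would invoke convexity of $C_M$: each matrix element of a state is linear in the state and the modulus is a convex function, so $C_M\big(\sum_a q_a\,\chi_a\big) \le \sum_a q_a\, C_M(\chi_a)$ for any probability vector $\{q_a\}$ and single-qubit states $\{\chi_a\}$. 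Applying this to $\mc{E}_i$ with the fixed basis $M_i$ yields, for every $i$,
\begin{equation}
\sum_a p(\rho_{B|\Lambda_i^a})\, C_{M_i}(\rho_{B|\Lambda_i^a}) \;\ge\; C_{M_i}(\rho_B),
\end{equation}
and summing over $i=1,2,3$ gives $\sum_{i,a} p(\rho_{B|\Lambda_i^a})\, C_{M_i}(\rho_{B|\Lambda_i^a}) \ge \sum_i C_{M_i}(\rho_B)$ for every choice of MUB sets $\{\Lambda_i\}_i$ and $\{M_i\}_i$.

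Finally, since $\mathcal{N}^\rightarrow(\rho_{AB})$ is the maximum of the left-hand side over all admissible $\{\Lambda_i\}_i$ and $\{M_i\}_i$, I would specialize $\{M_i\}_i$ to the arbitrary prescribed set of MUBs and $\{\Lambda_i\}_i$ to any set of MUBs on $A$, which immediately gives $\mathcal{N}^\rightarrow(\rho_{AB}) \ge \sum_i C_{M_i}(\rho_B)$; the analogous bound for $\mathsf{N}^\rightarrow$ follows from $\mathsf{N}^\rightarrow(\rho_{AB}) \ge \mathcal{N}^\rightarrow(\rho_{AB})$. I do not expect a genuine obstacle here: the only points demanding care are getting the direction of the convexity inequality right, and noting that---because we seek a lower bound---it suffices to exhibit one admissible choice of measurements inside the maximization, the optimization only strengthening the bound.
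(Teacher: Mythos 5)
Your proposal is correct and follows essentially the same route as the paper: convexity of the $l_1$-norm of coherence applied to each ensemble $\mc{E}_i$, the observation that Alice's measurement leaves Bob's average state equal to $\rho_B=\Tr_A[\rho_{AB}]$, and the fact that any admissible choice inside the maximization defining $\mathcal{N}^\rightarrow$ yields a lower bound (with the bound for $\mathsf{N}^\rightarrow$ following from $\mathsf{N}^\rightarrow\geq\mathcal{N}^\rightarrow$). Your explicit verification of the ensemble-average identity is a slightly more detailed rendering of the paper's remark that a measurement on one subsystem cannot disturb the average state of the other, but the argument is the same.
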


\begin{proof}
Let $\lbrace M_i \rbrace_i$ be an arbitrary set of mutually unbiased bases and
$\lbrace \Lambda^a_i \rbrace^a$ be another.
Then, the  NAQC functional for any bipartite state $\rho_{AB}$ can be expressed as
\begin{widetext}
    \begin{align*}
    \mathcal{N}^\rightarrow (\rho_{AB}) &\coloneqq \max_{M_i,\Lambda^a_i} \sum_{i,a} p(\rho_{B|\Lambda^a_i}) C_{M_i} (\rho_{B|\Lambda^a_i})\nonumber\\ 
    &= \max_{M_i,\Lambda^a_i} \sum_{i,a} p(\rho_{B|\Lambda^a_i}) C_{M_i}\left( \Tr_A[\frac{\bra{\Lambda^a_i \otimes \mathbbm{1}_B} \rho_{AB} \ket{\Lambda^a_i \otimes \mathbbm{1}_B}}{p(\rho_{B|\Lambda^a_i})}]\right)\\
    &\geq \max_{M_i,\Lambda^a_i}  \sum_{i} C_{M_i} \left(\sum_{a}  p(\rho_{B|\Lambda^a_i}) \Tr_A[\frac{\bra{\Lambda^a_i \otimes \mathbbm{1}_B} \rho_{AB} \ket{\Lambda^a_i \otimes \mathbbm{1}_B}}{p(\rho_{B|\Lambda^a_i})}]\right)\\
    &= \max_{M_i} \sum_{i} C_{M_i}(\Tr_A[\rho_{AB}]) = \max_{M_i} \sum_{i} C_{M_i}(\rho_B), \\
    \implies \mathcal{N}^\rightarrow (\rho_{AB}) &\geq \max_{M_i} \sum_{i} C_{M_i}(\rho_B),
    \end{align*}
\end{widetext}
where the first inequality is due to the convexity of quantum coherence~\cite{l1-norm} and the following equality is due to the fact that any measurement on any subsystem of a system cannot disturb the average state of the other subsystems.

We have used the notation $\rho_B \coloneqq \Tr_A[\rho_{AB}]$ for the reduced density matrix of the state $\rho_{AB}$ in the B part. This completes the proof. 
\end{proof}


Consider now an arbitrary set of mutually unbiased bases, $\lbrace M_i \rbrace_i$, for $i \in \lbrace 1,2,3 \rbrace$, of a single qubit, for which the elements of the bases can be expressed as follows:
\begin{eqnarray}
  \lbrace \ket{M_1^\pm} \rbrace &=& \lbrace \cos \frac{\theta'}{2} \ket{0} + e^{i \phi'} \sin \frac{\theta'}{2} \ket{1}, \nonumber \\ &&\sin \frac{\theta'}{2} \ket{0} - e^{i \phi'} \cos \frac{\theta'}{2} \ket{1} \rbrace, \label{eq-MUB-co-1}\\
   \ket{M_2^\pm} &=&  \frac{\ket{M_1^+} \pm \ket{M_1^-}}{\sqrt{2}} , \label{eq-MUB-co-2}\\
   \ket{M_3^\pm} &=&  \frac{\ket{M_1^+} \pm i \ket{M_1^-}}{\sqrt{2}}, \label{eq-MUB-co-3}
\end{eqnarray}
where $ \ket{0}, \ket{1} $ are eigenvectors of the $\sigma_z$ matrix, and $\theta'$, $\phi'$ are azimuthal and polar angles in spherical polar coordinates. Quantum coherences of a general single-qubit state in these bases are obtained in Appendix \ref{appendix-MUB-qubit}. These are important for evaluation of the NAQC functionals, later in the paper.
\begin{figure}[h!]
\includegraphics[scale=1.0]{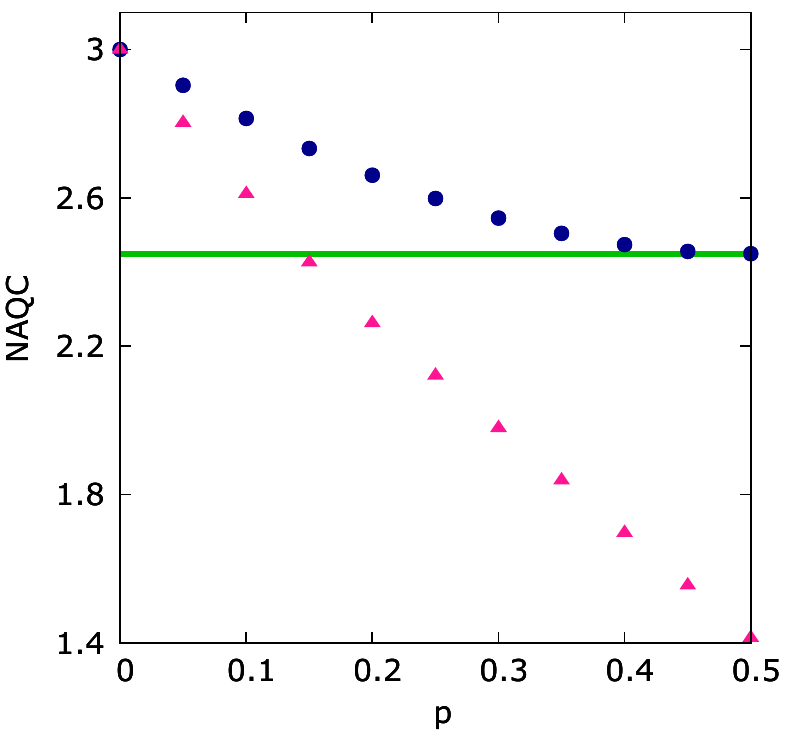}
\caption{(Color online)
NAQC for probabilistic mixtures of two Bell states. We plot the NAQC functionals (vertical axis) for mixtures of two Bell states, parameterized by the mixing parameter $p$ (horizontal axis). Pink triangles and blue circles denote the standard and generalized NAQC funtionals, respectively, whereas the green line represents the upper bound beyond which NAQC occurs in a two-qubit system. We find that the states with $p \lesssim 0.144$ exhibit NAQC using the standard criterion, whereas all states with $p \lesssim 0.5$ exhibit NAQC using the generalized criterion. This shows that the generalized criterion detects more two-qubit states exhibiting NAQC than the standard criterion. All quantities used are dimensionless.}
\label{fig1} 
\end{figure} 

It has been realized that non-steerable bipartite states can never exhibit nonlocal advantage of quantum coherence~\cite{debasis}, and thus the same follows for the standard NAQC as well as the generalized NAQC functionals (see Appendix \ref{appendix-naqc-steering}). Therefore, 
separable states will never exhibit NAQC, as indicated by using the standard NAQC and generalized NAQC functionals~\cite{pati}.

\begin{figure}[h!]
\includegraphics[scale=1.0]{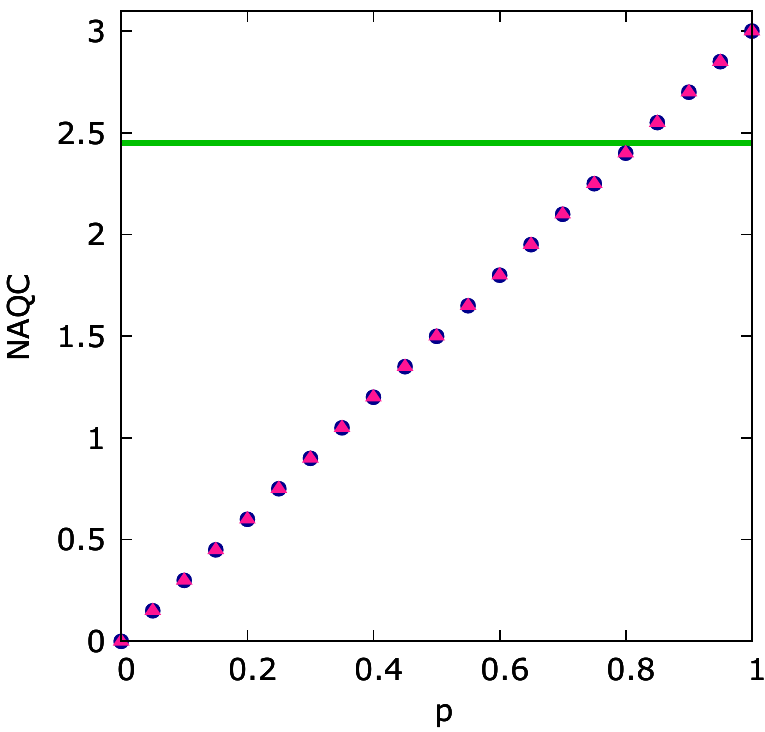}
\caption{(Color online)
Nature of NAQC for two-qubit Werner states. All considerations are the same as in Fig. \ref{fig1} except that here the states under discussion are two-qubit Werner states. They exhibit NAQC for the mixing parameter, 
$p \gtrsim 0.815$, using both the NAQC functionals considered in this paper.}
\label{fig2} 
\end{figure} 

A classical (probabilistic) mixture of two Bell states, say $\ket{\phi^{+}} ( \coloneqq \frac{1}{\sqrt{2}} \lbrack \ket{00} + \ket{11}\rbrack )$ and $\ket{\psi^{+}} (\coloneqq \frac{1}{\sqrt{2}} \lbrack \ket{10} + \ket{01}\rbrack ) $, can be represented as
\begin{align}
\label{eq-state-pure}
    \rho_{AB} = p \ket{\phi^+} \bra{\phi^+} + (1-p) \ket{\psi^+} \bra{\psi^+},
\end{align}
with $p \in \lbrack 0, 1 \rbrack$. 
Using this set of states, we find that the generalized NAQC criterion detects strictly more two-qubit states exhibiting NAQC than the standard NAQC criterion. This is shown in Fig. \ref{fig1}. 
As another example, consider the two-qubit Werner states,
\begin{equation}
    \rho_W = p \ket{\phi^+} \bra{\phi^+} + \frac{(1-p)}{4} \mathbbm{1}, 
\end{equation}
with $p \in \lbrack 0 , 1\rbrack$ and $\mathbbm{1}$ representing the identity operator on the $\mathbb{C}^2 \otimes \mathbb{C}^2$ Hilbert space. It is known that two-qubit Werner states are entangled if $p > \frac{1}{3}$~\cite{peres-PPT,horodecki-PPT} and steerable if $p > \frac{1}{2}$~\cite{werner-different-range}. Interestingly, we observe in Fig. \ref{fig2} that there exist entangled and steerable states which does not exhibit nonlocal advantage of quantum coherence when the standard NAQC or the generalized NAQC functionals are taken into account. So, the set of states showing NAQC, using both the NAQC functionals, is still a strict subset of the set of entangled states, as well as that of the set of steerable states.

\section{Monogamy of NAQC}
\label{sec4}


Since a NAQC functional of a bipartite system involves quantum measurements on one of the subsystems and quantum coherence of reduced states on the other subsystem, monogamy of NAQC for tripartite state $\rho_{ABC}$ can be seen from two different perspectives. In one of them, quantum coherence is measured on a fixed subsystem, $A$, while measurements for creating ensembles at A are performed on one of the remaining
subsystems, viz. $B$ and $C$, so that the quantity under study is
\begin{equation}
\label{same-co-eqn}
    \mathsf{N}^\leftarrow(\rho_{AB}) + \mathsf{N}^\leftarrow(\rho_{AC}).
\end{equation}
In the other one, monogamy of NAQC is studied in the situation where quantum measurements for generating ensembles are performed on a fixed subsystem $A$ of the multipartite system, and quantum coherence of states is measured by switching the remaining subsystems, i.e., the quantity studied is
\begin{eqnarray}
\label{same-mea-eqn}
  \mathsf{N}^\rightarrow(\rho_{AB}) +\mathsf{N}^\rightarrow(\rho_{AC}).
\end{eqnarray}
Here,
~$\rho_{AB}$ and $\rho_{AC}$ denote reduced states of $\rho_{ABC}$ by tracing out parties $C$ and $B$, respectively. We now note that $2\sqrt{6}$ would be a non-trivial upper bound for the above sums, for monogamy of NAQC in the tripartite system ABC. 
The inequalities would then suggest that if the pair $AB$ shows NAQC, then the pair $AC$ will not exhibit NAQC, and vice-versa. 
Such monogamy has previously been observed for Bell correlations~\cite{toner-bell-2,toner-bell-1,kazlikowski-bell-monogamy-1,kazlikowski-bell-monogamy-2}, and for quantum dense coding~\cite{prabhu-dense-coding}.

\begin{figure}[h!]
\includegraphics[scale=0.5]{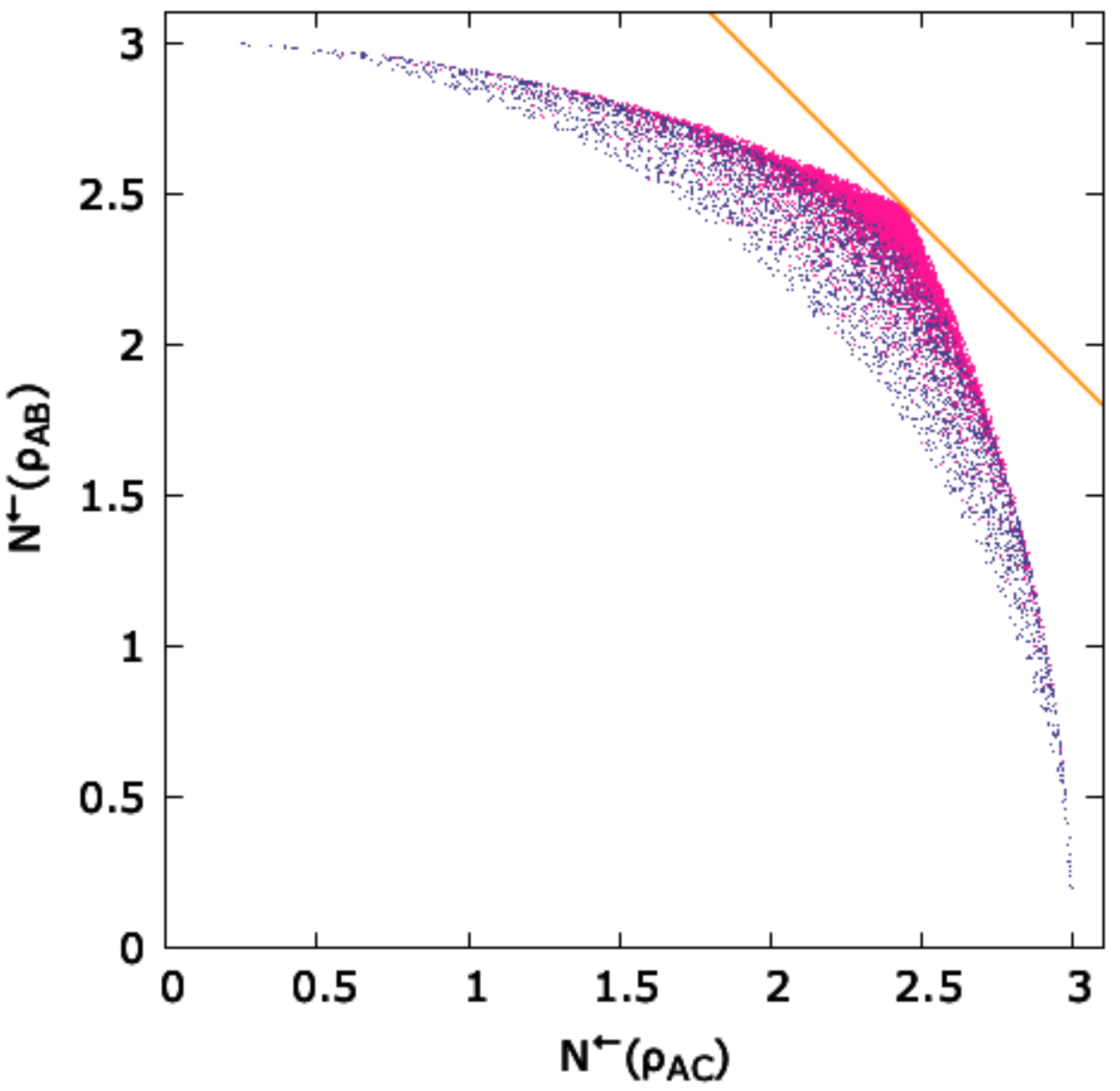}
\caption{Exclusion principle for NAQC in three-qubit system. We analyze here the nature of monogamy of the generalized NAQC functional when the reduced states of the fixed subsystem $A$ are subjected to coherence measurement for any pure three-qubit state of $ABC$. We plot $\mathsf{N}^\leftarrow(\rho_{AC})$ along the horizontal axis and $\mathsf{N}^\leftarrow(\rho_{AB})$ along the vertical axis. GHZ- and W-class
genuine three-qubit entangled pure states are represented by pink and dark blue circles, respectively, while the condition $\mathsf{N}^\leftarrow(\rho_{AC}) + \mathsf{N}^\leftarrow(\rho_{AB}) = 2\sqrt{6}$ is denoted by the orange straight line. All quantities used are dimensionless.}
\label{fig3} 
\end{figure}

\subsection{Monogamy when coherence is measured on a fixed subsystem}
We examine here monogamy of the generalized NAQC functional for three-qubit pure states $\rho_{ABC}$, 
where quantum coherence is measured on the fixed subsystem, $A$. We therefore focus on the expression in \eqref{same-co-eqn}.

Note that the upper bound of \eqref{same-co-eqn} that can be achieved, by fully separable pure three-qubit states, is $2\sqrt{6}$. For all bi-separable pure three-qubit states, the maximum of $\mathsf{N}^\leftarrow(\rho_{AB}) + \mathsf{N}^\leftarrow(\rho_{AC})$ is also $(2\sqrt{6})$.
By employing a non-linear optimization routine, we numerically analyze monogamy of the generalized NAQC functional for all GHZ- and W-class three-qubit pure states. We find that the expression in \eqref{same-co-eqn} can reach a maximum of $\mathsf{N}^\leftarrow(\rho_{AB}) + \mathsf{N}^\leftarrow(\rho_{AC}) \approx 4.899 \approx 2\sqrt{6}$. Since the upper bound of $\mathsf{N}^\leftarrow(\rho_{AB}) + \mathsf{N}^\leftarrow(\rho_{AC})$ for all three-qubit pure states is $2\sqrt{6}$, we therefore have a strong monogamy - an exclusion principle - for such states, as if $\rho_{AB}$ exhibits NAQC, $\rho_{AC}$ does not, and vice versa, where the quantum coherence measurements in both cases are at the party A.
In Fig. \ref{fig3}, we also plot a scatter diagram of $\mathsf{N}^\leftarrow(\rho_{AB})$ versus $\mathsf{N}^\leftarrow(\rho_{AC})$ for Haar-uniformly generated 
%
Greenberger–Horne–Zeilinger- (GHZ-)~\cite{ghz-class-1,ghz-class-2} and W-class~\cite{w-class-1,three-pure-2} three-qubit pure states.
%

\begin{figure}[h!]
\includegraphics[scale=0.5]{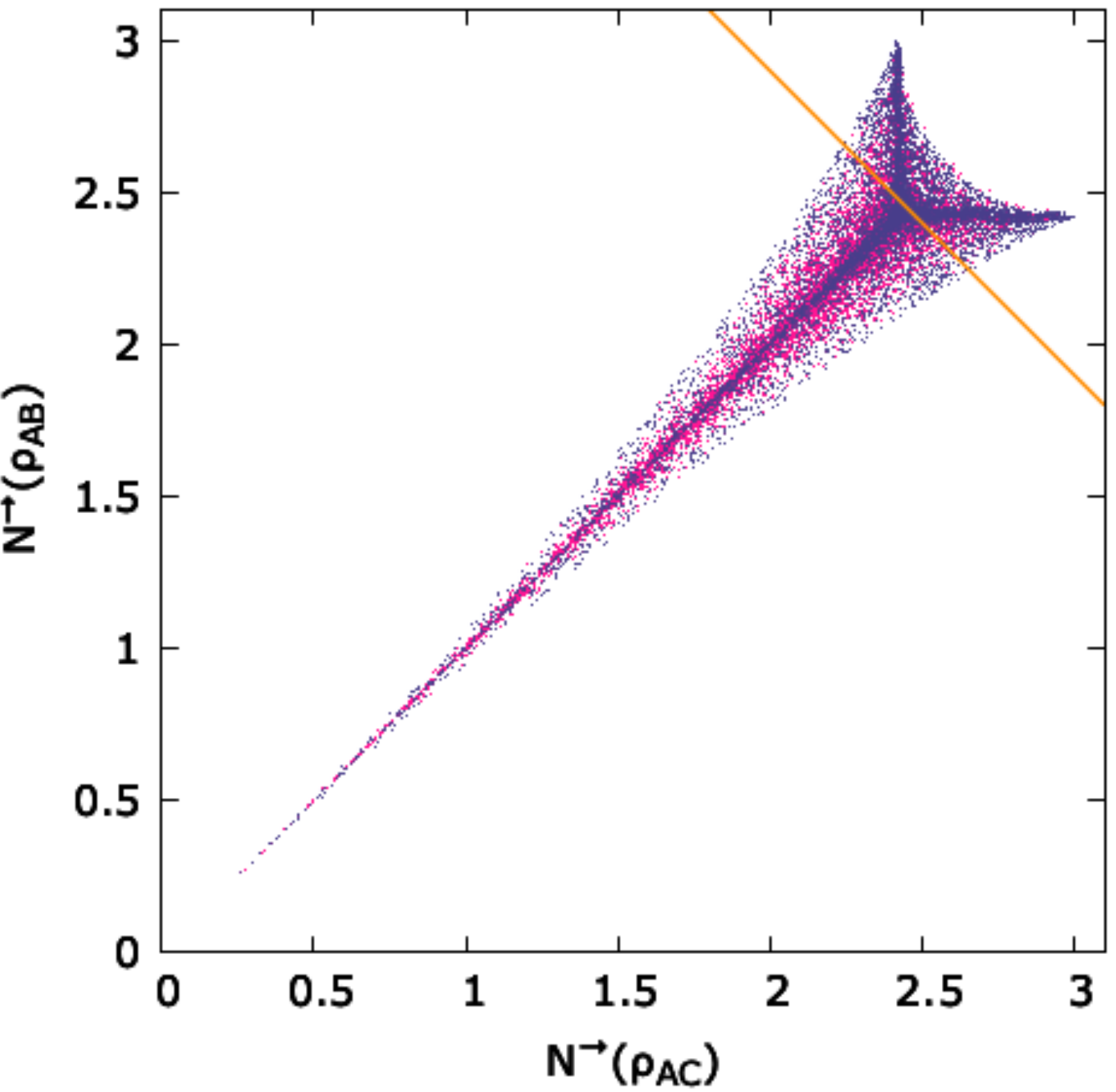}
\caption{
Sharing of nonlocal advantage of quantum coherence, when ensemble-generating measurements are performed on a fixed party. We plot $\mathsf{N}^\rightarrow(\rho_{AB})$ on the vertical axis against $\mathsf{N}^\rightarrow(\rho_{AC})$ represented on the horizontal axis, for Haar-uniformly generated pure genuine three-qubit entangled states, viz. the states of the GHZ- and W-classes. GHZ- and W-class states are denoted by pink and dark blue points, respectively, whereas the orange line represents the equation of the straight line, $\mathsf{N}^\rightarrow(\rho_{AC}) + \mathsf{N}^\rightarrow(\rho_{AB}) = 2\sqrt{6}$. 
All quantities used are dimensionless.}
\label{fig4} 
\end{figure}

\subsection{When ensemble-preparing measurement is performed on a fixed subsystem}
Finally, we analyze monogamy properties of the generalized NAQC functional for any three-qubit pure state under the assumption that the ensemble-generating projective measurements are performed on a fixed subsystem of the tripartite system.


We notice that the maximum of \eqref{same-mea-eqn}, for all fully separable and bi-separable three-qubit pure states, would be $2\sqrt{6}$ and $(3+\sqrt{6})$ respectively. We also observe that, unlike the previous case, a generic pure three-qubit state may not always satisfy the strong monogamy - the exclusion principle - for the generalized NAQC, when the ensemble-generating projective measurements are performed on a fixed subsystem. I.e., there exists genuine three-party entangled pure states that violate the inequality, $\mathsf{N}^\rightarrow(\rho_{AC}) + \mathsf{N}^\rightarrow(\rho_{AB}) \leq 2\sqrt{6}$.
This is demonstrated in Fig. \ref{fig4} by plotting $\mathsf{N}^\rightarrow(\rho_{AB})$ with respect to $\mathsf{N}^\rightarrow(\rho_{AC})$ for Haar-uniformly generated pure three-qubit states.

\section{conclusion}
\label{sec5}


Nonlocal advantage of quantum coherence captures a form of ``steerability" of bipartite quantum states in terms of quantum coherence. There exists a quantum coherence complementarity relation for an isolated single-qubit system, stating that the sum of quantum coherences in any set of mutually unbiased bases is non-trivially bounded from above. A two-qubit state can exhibit NAQC if for some measurements on one of the subsystems, the sum of the average quantum coherences of the other subsystem violates the isolated single-qubit complementarity relation.

In this paper, we have considered detection of NAQC by using two criteria. The first one is referred to as the standard criterion, which allows measurements in arbitrary set of MUBs on one of the subsystems of the bipartite system, for creating ensembles. The second criterion is termed the generalized criterion, as the restriction of the ensemble-generating measurement bases to be a set of MUBs is relaxed. 
In both the criteria, an optimization of the NAQC functionals over all the relevant measurement bases is considered, which makes both the NAQC functionals local unitarily invariant over the states. We also explicitly demonstrated that the generalized NAQC criterion performs better than the standard NAQC criterion, in that the former can capture a greater number of bipartite states exhibiting nonlocal advantage of quantum coherence. In addition, we provided a lower bound on both the NAQC functionals for bipartite systems in terms of the quantum coherence of a reduced state.

Lastly, we examine the monogamy of NAQC in tripartite systems in two different cases - first, when coherence of states of a fixed subsystem of the tripartite system is measured, and second, when the ensemble-generating projective measurements are carried out on a fixed subsystem of the tripartite system. 
In the first case, it is shown that all three-qubit pure states follows a strong monogamy - an exclusion principle - of NAQC. However, in the second case, the strong monogamy of NAQC may or may not be followed for general pure three-qubit states. 
\section*{Acknowledgment}
We acknowledge partial support from the Department of Science and Technology, Government of India through the QuEST grant (grant number DST/ICPS/QUST/Theme3/2019/120).

\begin{appendix}
\section{Quantum coherence of a qubit}
\label{appendix-MUB-qubit} 
We provide here that the quantum coherence of a qubit state with respect to a basis chosen from an arbitrary set of MUBs on the qubit Hilbert space. A single-qubit state, $\rho$, expressed in the $\sigma_z$ basis, will have the form
\begin{align*}
    \rho &\coloneqq \left( \begin{array}{cc}
\rho_{00} & \rho_{10}\\
\rho_{01} & \rho_{11} \end{array} \right). 
\end{align*}

The $l_1$-norm of quantum coherence with respect to the 
bases in Eqs. \eqref{eq-MUB-co-1}, \eqref{eq-MUB-co-2}, and \eqref{eq-MUB-co-3} will be
\begin{align}
 C_{M_1^\pm} &= 2 |\frac{1}{2} \sin \theta' \rho_{00} - \frac{1}{2} \sin \theta' \rho_{11} - e^{i \phi'} \cos^2 \frac{\theta'}{2} \rho_{01} \nonumber \\ &+ e^{-i \phi'} \sin^2 \frac{\theta'}{2} \rho_{10}|, \label{eq-co-0} \\ 
 C_{M_2^\pm} &= |\cos \theta' \rho_{00} - \cos \theta' \rho_{11} + e^{i \phi'} (1+ \sin \theta') \rho_{01} \nonumber \\ & - e^{-i \phi'} (1- \sin \theta') \rho_{10}|,\label{eq-co-1} \\  \label{eq-co-2}
    C_{M_3^\pm} &= |\rho_{00} - \rho_{11} + i e^{i \phi'} \rho_{01} + i e^{-i \phi'} \rho_{10}|.
\end{align}

\section{Proof of non-exhibition of generalized NAQC by any non-steerable two-qubit state}
\label{appendix-naqc-steering}
In quantum information tasks, steerability~\cite{steer-rev} is a significant quantum resource.
Let us assume that Alice and Bob are situated in two distant labs, and share a state, $\rho_{AB}$, between them. Suppose that Alice makes a measurement in the setting ``x", and obtains the outcome ``a''. Let the conditional state obtained thereby at Bob be $\rho(a|x)$. Assume also that there exists a hidden variable $\lambda$, distributed as $p_{\lambda}$, such that there exists a ``hidden" state, $\sigma_B(\lambda)$ at Bob and a conditional probability $p(a|x,\lambda)$ with 
\begin{equation*}
    \rho (a|x) = \sum_\lambda  p_\lambda p(a|x,\lambda) \sigma_B (\lambda).
\end{equation*}
Then we call the state $\rho_{AB}$ as non-steerable if the above relation is available for every setting $x$ and every outcome $a$. Otherwise, it is steerable.

The NAQC for any non-steerable bipartite state can be determined as
\begin{align*}
  &\mathcal{N}^\rightarrow (\rho_{AB}) \\&\coloneqq \max_{M_i,\Lambda^a_i} \sum_{i,a} p(\rho_{B|\Lambda^a_i}) C_{M_i} (\rho_{B|\Lambda^a_i})\\
  &= \max_{M_i,\Lambda^a_i} \sum_{i,a} p(\rho_{B|\Lambda^a_i}) C_{M_i} \biggl( \frac{\sum_\lambda p_\lambda p(a|\Lambda^a_i, \lambda) \sigma_B (\lambda)}{p(\rho_{B|\Lambda^a_i})} \biggr)\\ 
  &\leq \max_{M_i,\Lambda^a_i} \sum_{i,a,\lambda} p_\lambda p(a|\Lambda^a_i, \lambda) C_{M_i} (\sigma_B (\lambda)),\\
  &= \max_{M_i} \sum_{i,\lambda} p_\lambda C_{M_i} (\sigma_B (\lambda))\\
  &\leq \sum_{i,\lambda} p_\lambda \sqrt{6} = \sqrt{6},
\end{align*}
where the first and second inequalities are obtained respectively, by utilizing the convexity of quantum coherence and the coherence complementarity relation of single-qubit systems.

Hence, any non-steerable two-qubit state will never exhibit NAQC. Analogously, it can be proved for the generalized NAQC functional.
\end{appendix}

\bibliography{naqc_qm}

\begin{thebibliography}{57}%
\makeatletter
\providecommand \@ifxundefined [1]{%
 \@ifx{#1\undefined}
}%
\providecommand \@ifnum [1]{%
 \ifnum #1\expandafter \@firstoftwo
 \else \expandafter \@secondoftwo
 \fi
}%
\providecommand \@ifx [1]{%
 \ifx #1\expandafter \@firstoftwo
 \else \expandafter \@secondoftwo
 \fi
}%
\providecommand \natexlab [1]{#1}%
\providecommand \enquote  [1]{``#1''}%
\providecommand \bibnamefont  [1]{#1}%
\providecommand \bibfnamefont [1]{#1}%
\providecommand \citenamefont [1]{#1}%
\providecommand \href@noop [0]{\@secondoftwo}%
\providecommand \href [0]{\begingroup \@sanitize@url \@href}%
\providecommand \@href[1]{\@@startlink{#1}\@@href}%
\providecommand \@@href[1]{\endgroup#1\@@endlink}%
\providecommand \@sanitize@url [0]{\catcode `\\12\catcode `\$12\catcode
  `\&12\catcode `\#12\catcode `\^12\catcode `\_12\catcode `\%12\relax}%
\providecommand \@@startlink[1]{}%
\providecommand \@@endlink[0]{}%
\providecommand \url  [0]{\begingroup\@sanitize@url \@url }%
\providecommand \@url [1]{\endgroup\@href {#1}{\urlprefix }}%
\providecommand \urlprefix  [0]{URL }%
\providecommand \Eprint [0]{\href }%
\providecommand \doibase [0]{http://dx.doi.org/}%
\providecommand \selectlanguage [0]{\@gobble}%
\providecommand \bibinfo  [0]{\@secondoftwo}%
\providecommand \bibfield  [0]{\@secondoftwo}%
\providecommand \translation [1]{[#1]}%
\providecommand \BibitemOpen [0]{}%
\providecommand \bibitemStop [0]{}%
\providecommand \bibitemNoStop [0]{.\EOS\space}%
\providecommand \EOS [0]{\spacefactor3000\relax}%
\providecommand \BibitemShut  [1]{\csname bibitem#1\endcsname}%
\let\auto@bib@innerbib\@empty
\bibitem [{\citenamefont {Streltsov}\ \emph {et~al.}(2017)\citenamefont
  {Streltsov}, \citenamefont {Adesso},\ and\ \citenamefont
  {Plenio}}]{review-coherence-1}%
  \BibitemOpen
  \bibfield  {author} {\bibinfo {author} {\bibfnamefont {A.}~\bibnamefont
  {Streltsov}}, \bibinfo {author} {\bibfnamefont {G.}~\bibnamefont {Adesso}}, \
  and\ \bibinfo {author} {\bibfnamefont {M.~B.}\ \bibnamefont {Plenio}},\
  }\bibfield  {title} {\enquote {\bibinfo {title} {Colloquium: Quantum
  coherence as a resource},}\ }\href
  {https://link.aps.org/doi/10.1103/RevModPhys.89.041003} {\bibfield  {journal}
  {\bibinfo  {journal} {Rev. Mod. Phys.}\ }\textbf {\bibinfo {volume} {89}},\
  \bibinfo {pages} {041003} (\bibinfo {year} {2017})}\BibitemShut {NoStop}%
\bibitem [{\citenamefont {Giovannetti}\ \emph {et~al.}(2011)\citenamefont
  {Giovannetti}, \citenamefont {Lloyd},\ and\ \citenamefont
  {Maccone}}]{Giovannetti_2011}%
  \BibitemOpen
  \bibfield  {author} {\bibinfo {author} {\bibfnamefont {V.}~\bibnamefont
  {Giovannetti}}, \bibinfo {author} {\bibfnamefont {S.}~\bibnamefont {Lloyd}},
  \ and\ \bibinfo {author} {\bibfnamefont {L.}~\bibnamefont {Maccone}},\
  }\bibfield  {title} {\enquote {\bibinfo {title} {Advances in quantum
  metrology},}\ }\href {\doibase 10.1038/nphoton.2011.35} {\bibfield  {journal}
  {\bibinfo  {journal} {Nature Photonics}\ }\textbf {\bibinfo {volume} {5}},\
  \bibinfo {pages} {222} (\bibinfo {year} {2011})}\BibitemShut {NoStop}%
\bibitem [{\citenamefont {Giorda}\ and\ \citenamefont
  {Allegra}(2017)}]{Giorda18}%
  \BibitemOpen
  \bibfield  {author} {\bibinfo {author} {\bibfnamefont {P.}~\bibnamefont
  {Giorda}}\ and\ \bibinfo {author} {\bibfnamefont {M.}~\bibnamefont
  {Allegra}},\ }\bibfield  {title} {\enquote {\bibinfo {title} {Coherence in
  quantum estimation},}\ }\href {\doibase 10.1088/1751-8121/aa9808} {\bibfield
  {journal} {\bibinfo  {journal} {Journal of Physics A: Mathematical and
  Theoretical}\ }\textbf {\bibinfo {volume} {51}},\ \bibinfo {pages} {025302}
  (\bibinfo {year} {2017})}\BibitemShut {NoStop}%
\bibitem [{\citenamefont {Plenio}\ and\ \citenamefont
  {Huelga}(2008)}]{Plenio_2008}%
  \BibitemOpen
  \bibfield  {author} {\bibinfo {author} {\bibfnamefont {M.~B.}\ \bibnamefont
  {Plenio}}\ and\ \bibinfo {author} {\bibfnamefont {S.~F.}\ \bibnamefont
  {Huelga}},\ }\bibfield  {title} {\enquote {\bibinfo {title}
  {Dephasing-assisted transport: quantum networks and biomolecules},}\ }\href
  {https://dx.doi.org/10.1088/1367-2630/10/11/113019} {\bibfield  {journal}
  {\bibinfo  {journal} {New Journal of Physics}\ }\textbf {\bibinfo {volume}
  {10}},\ \bibinfo {pages} {113019} (\bibinfo {year} {2008})}\BibitemShut
  {NoStop}%
\bibitem [{\citenamefont {Rebentrost}\ \emph {et~al.}(2009)\citenamefont
  {Rebentrost}, \citenamefont {Mohseni},\ and\ \citenamefont
  {Aspuru-Guzik}}]{Rebentrost_2009}%
  \BibitemOpen
  \bibfield  {author} {\bibinfo {author} {\bibfnamefont {P.}~\bibnamefont
  {Rebentrost}}, \bibinfo {author} {\bibfnamefont {M.}~\bibnamefont {Mohseni}},
  \ and\ \bibinfo {author} {\bibfnamefont {A.}~\bibnamefont {Aspuru-Guzik}},\
  }\bibfield  {title} {\enquote {\bibinfo {title} {Role of quantum coherence
  and environmental fluctuations in chromophoric energy transport},}\ }\href
  {https://doi.org/10.1021%2Fjp901724d} {\bibfield  {journal} {\bibinfo
  {journal} {The Journal of Physical Chemistry B}\ }\textbf {\bibinfo {volume}
  {113}},\ \bibinfo {pages} {9942} (\bibinfo {year} {2009})}\BibitemShut
  {NoStop}%
\bibitem [{\citenamefont {Lloyd}(2011)}]{Lloyd_2011}%
  \BibitemOpen
  \bibfield  {author} {\bibinfo {author} {\bibfnamefont {S.}~\bibnamefont
  {Lloyd}},\ }\bibfield  {title} {\enquote {\bibinfo {title} {Quantum coherence
  in biological systems},}\ }\href
  {https://dx.doi.org/10.1088/1742-6596/302/1/012037} {\bibfield  {journal}
  {\bibinfo  {journal} {Journal of Physics: Conference Series}\ }\textbf
  {\bibinfo {volume} {302}},\ \bibinfo {pages} {012037} (\bibinfo {year}
  {2011})}\BibitemShut {NoStop}%
\bibitem [{\citenamefont {Huelga}\ and\ \citenamefont
  {Plenio}(2013)}]{Huelga_2013}%
  \BibitemOpen
  \bibfield  {author} {\bibinfo {author} {\bibfnamefont {S.~F.}\ \bibnamefont
  {Huelga}}\ and\ \bibinfo {author} {\bibfnamefont {M.~B.}\ \bibnamefont
  {Plenio}},\ }\bibfield  {title} {\enquote {\bibinfo {title} {Vibrations,
  quanta and biology},}\ }\href
  {https://doi.org/10.1080%2F00405000.2013.829687} {\bibfield  {journal}
  {\bibinfo  {journal} {Contemporary Physics}\ }\textbf {\bibinfo {volume}
  {54}},\ \bibinfo {pages} {181} (\bibinfo {year} {2013})}\BibitemShut
  {NoStop}%
\bibitem [{\citenamefont {Abbott}\ \emph {et~al.}(2008)\citenamefont {Abbott},
  \citenamefont {Davies},\ and\ \citenamefont {Pati}}]{alma9916559633502466}%
  \BibitemOpen
  \bibfield  {author} {\bibinfo {author} {\bibfnamefont {D.}~\bibnamefont
  {Abbott}}, \bibinfo {author} {\bibfnamefont {C.~P.~W.}\ \bibnamefont
  {Davies}}, \ and\ \bibinfo {author} {\bibfnamefont {A.~K.}\ \bibnamefont
  {Pati}},\ }\href {https://doi.org/10.1142/p581} {\emph {\bibinfo {title}
  {Quantum aspects of life}}}\ (\bibinfo  {publisher} {Imperial College
  Press},\ \bibinfo {address} {London},\ \bibinfo {year} {2008})\BibitemShut
  {NoStop}%
\bibitem [{\citenamefont {Rodríguez-Rosario}\ \emph
  {et~al.}(2013)\citenamefont {Rodríguez-Rosario}, \citenamefont
  {Frauenheim},\ and\ \citenamefont {Aspuru-Guzik}}]{Rosario13}%
  \BibitemOpen
  \bibfield  {author} {\bibinfo {author} {\bibfnamefont {C.~A.}\ \bibnamefont
  {Rodríguez-Rosario}}, \bibinfo {author} {\bibfnamefont {T.}~\bibnamefont
  {Frauenheim}}, \ and\ \bibinfo {author} {\bibfnamefont {A.}~\bibnamefont
  {Aspuru-Guzik}},\ }\bibfield  {title} {\enquote {\bibinfo {title}
  {Thermodynamics of quantum coherence},}\ }\href
  {https://doi.org/10.48550/arXiv.1308.1245} {\bibfield  {journal} {\bibinfo
  {journal} {arXiv:1308.1245}\ } (\bibinfo {year} {2013})}\BibitemShut
  {NoStop}%
\bibitem [{\citenamefont {Lostaglio}\ \emph
  {et~al.}(2015{\natexlab{a}})\citenamefont {Lostaglio}, \citenamefont
  {Jennings},\ and\ \citenamefont {Rudolph}}]{Lostaglio_2015}%
  \BibitemOpen
  \bibfield  {author} {\bibinfo {author} {\bibfnamefont {M.}~\bibnamefont
  {Lostaglio}}, \bibinfo {author} {\bibfnamefont {D.}~\bibnamefont {Jennings}},
  \ and\ \bibinfo {author} {\bibfnamefont {T.}~\bibnamefont {Rudolph}},\
  }\bibfield  {title} {\enquote {\bibinfo {title} {Description of quantum
  coherence in thermodynamic processes requires constraints beyond free
  energy},}\ }\href {https://doi.org/10.1038%2Fncomms7383} {\bibfield
  {journal} {\bibinfo  {journal} {Nature Communications}\ }\textbf {\bibinfo
  {volume} {6}},\ \bibinfo {pages} {6383} (\bibinfo {year}
  {2015}{\natexlab{a}})}\BibitemShut {NoStop}%
\bibitem [{\citenamefont {Lostaglio}\ \emph
  {et~al.}(2015{\natexlab{b}})\citenamefont {Lostaglio}, \citenamefont
  {Korzekwa}, \citenamefont {Jennings},\ and\ \citenamefont
  {Rudolph}}]{PhysRevX.5.021001}%
  \BibitemOpen
  \bibfield  {author} {\bibinfo {author} {\bibfnamefont {M.}~\bibnamefont
  {Lostaglio}}, \bibinfo {author} {\bibfnamefont {K.}~\bibnamefont {Korzekwa}},
  \bibinfo {author} {\bibfnamefont {D.}~\bibnamefont {Jennings}}, \ and\
  \bibinfo {author} {\bibfnamefont {T.}~\bibnamefont {Rudolph}},\ }\bibfield
  {title} {\enquote {\bibinfo {title} {Quantum coherence, time-translation
  symmetry, and thermodynamics},}\ }\href
  {https://link.aps.org/doi/10.1103/PhysRevX.5.021001} {\bibfield  {journal}
  {\bibinfo  {journal} {Phys. Rev. X}\ }\textbf {\bibinfo {volume} {5}},\
  \bibinfo {pages} {021001} (\bibinfo {year} {2015}{\natexlab{b}})}\BibitemShut
  {NoStop}%
\bibitem [{\citenamefont {Gardas}\ and\ \citenamefont
  {Deffner}(2015)}]{PhysRevE.92.042126}%
  \BibitemOpen
  \bibfield  {author} {\bibinfo {author} {\bibfnamefont {B.}~\bibnamefont
  {Gardas}}\ and\ \bibinfo {author} {\bibfnamefont {S.}~\bibnamefont
  {Deffner}},\ }\bibfield  {title} {\enquote {\bibinfo {title} {Thermodynamic
  universality of quantum carnot engines},}\ }\href
  {https://link.aps.org/doi/10.1103/PhysRevE.92.042126} {\bibfield  {journal}
  {\bibinfo  {journal} {Phys. Rev. E}\ }\textbf {\bibinfo {volume} {92}},\
  \bibinfo {pages} {042126} (\bibinfo {year} {2015})}\BibitemShut {NoStop}%
\bibitem [{\citenamefont {{\AA}berg}(2006)}]{Aberg06}%
  \BibitemOpen
  \bibfield  {author} {\bibinfo {author} {\bibfnamefont {J.}~\bibnamefont
  {{\AA}berg}},\ }\bibfield  {title} {\enquote {\bibinfo {title} {Quantifying
  superposition},}\ }\href {https://arxiv.org/abs/quant-ph/0612146} {\bibfield
  {journal} {\bibinfo  {journal} {arXiv:quant-ph/0612146}\ } (\bibinfo {year}
  {2006})}\BibitemShut {NoStop}%
\bibitem [{\citenamefont {Baumgratz}\ \emph {et~al.}(2014)\citenamefont
  {Baumgratz}, \citenamefont {Cramer},\ and\ \citenamefont {Plenio}}]{l1-norm}%
  \BibitemOpen
  \bibfield  {author} {\bibinfo {author} {\bibfnamefont {T.}~\bibnamefont
  {Baumgratz}}, \bibinfo {author} {\bibfnamefont {M.}~\bibnamefont {Cramer}}, \
  and\ \bibinfo {author} {\bibfnamefont {M.~B.}\ \bibnamefont {Plenio}},\
  }\bibfield  {title} {\enquote {\bibinfo {title} {Quantifying coherence},}\
  }\href {https://link.aps.org/doi/10.1103/PhysRevLett.113.140401} {\bibfield
  {journal} {\bibinfo  {journal} {Phys. Rev. Lett.}\ }\textbf {\bibinfo
  {volume} {113}},\ \bibinfo {pages} {140401} (\bibinfo {year}
  {2014})}\BibitemShut {NoStop}%
\bibitem [{\citenamefont {Winter}\ and\ \citenamefont {Yang}(2016)}]{Winter16}%
  \BibitemOpen
  \bibfield  {author} {\bibinfo {author} {\bibfnamefont {A.}~\bibnamefont
  {Winter}}\ and\ \bibinfo {author} {\bibfnamefont {D.}~\bibnamefont {Yang}},\
  }\bibfield  {title} {\enquote {\bibinfo {title} {Operational resource theory
  of coherence},}\ }\href {\doibase 10.1103/PhysRevLett.116.120404} {\bibfield
  {journal} {\bibinfo  {journal} {Phys. Rev. Lett.}\ }\textbf {\bibinfo
  {volume} {116}},\ \bibinfo {pages} {120404} (\bibinfo {year}
  {2016})}\BibitemShut {NoStop}%
\bibitem [{\citenamefont {Theurer}\ \emph {et~al.}(2017)\citenamefont
  {Theurer}, \citenamefont {Killoran}, \citenamefont {Egloff},\ and\
  \citenamefont {Plenio}}]{plenio-coherence}%
  \BibitemOpen
  \bibfield  {author} {\bibinfo {author} {\bibfnamefont {T.}~\bibnamefont
  {Theurer}}, \bibinfo {author} {\bibfnamefont {N.}~\bibnamefont {Killoran}},
  \bibinfo {author} {\bibfnamefont {D.}~\bibnamefont {Egloff}}, \ and\ \bibinfo
  {author} {\bibfnamefont {M.~B.}\ \bibnamefont {Plenio}},\ }\bibfield  {title}
  {\enquote {\bibinfo {title} {Resource theory of superposition},}\ }\href
  {https://link.aps.org/doi/10.1103/PhysRevLett.119.230401} {\bibfield
  {journal} {\bibinfo  {journal} {Phys. Rev. Lett.}\ }\textbf {\bibinfo
  {volume} {119}},\ \bibinfo {pages} {230401} (\bibinfo {year}
  {2017})}\BibitemShut {NoStop}%
\bibitem [{\citenamefont {Bischof}\ \emph {et~al.}(2019)\citenamefont
  {Bischof}, \citenamefont {Kampermann},\ and\ \citenamefont
  {Bru\ss{}}}]{Bischof19}%
  \BibitemOpen
  \bibfield  {author} {\bibinfo {author} {\bibfnamefont {F.}~\bibnamefont
  {Bischof}}, \bibinfo {author} {\bibfnamefont {H.}~\bibnamefont {Kampermann}},
  \ and\ \bibinfo {author} {\bibfnamefont {D.}~\bibnamefont {Bru\ss{}}},\
  }\bibfield  {title} {\enquote {\bibinfo {title} {Resource theory of coherence
  based on positive-operator-valued measures},}\ }\href {\doibase
  10.1103/PhysRevLett.123.110402} {\bibfield  {journal} {\bibinfo  {journal}
  {Phys. Rev. Lett.}\ }\textbf {\bibinfo {volume} {123}},\ \bibinfo {pages}
  {110402} (\bibinfo {year} {2019})}\BibitemShut {NoStop}%
\bibitem [{\citenamefont {Srivastava}\ \emph {et~al.}(2021)\citenamefont
  {Srivastava}, \citenamefont {Das},\ and\ \citenamefont {Sen}}]{Srivastava21}%
  \BibitemOpen
  \bibfield  {author} {\bibinfo {author} {\bibfnamefont {C.}~\bibnamefont
  {Srivastava}}, \bibinfo {author} {\bibfnamefont {S.}~\bibnamefont {Das}}, \
  and\ \bibinfo {author} {\bibfnamefont {U.}~\bibnamefont {Sen}},\ }\bibfield
  {title} {\enquote {\bibinfo {title} {Resource theory of quantum coherence
  with probabilistically nondistinguishable pointers and corresponding
  wave-particle duality},}\ }\href {\doibase 10.1103/PhysRevA.103.022417}
  {\bibfield  {journal} {\bibinfo  {journal} {Phys. Rev. A}\ }\textbf {\bibinfo
  {volume} {103}},\ \bibinfo {pages} {022417} (\bibinfo {year}
  {2021})}\BibitemShut {NoStop}%
\bibitem [{\citenamefont {Das}\ \emph {et~al.}(2020)\citenamefont {Das},
  \citenamefont {Mukhopadhyay}, \citenamefont {Roy}, \citenamefont
  {Bhattacharya}, \citenamefont {Sen(De)},\ and\ \citenamefont
  {Sen}}]{sreetama-coherence}%
  \BibitemOpen
  \bibfield  {author} {\bibinfo {author} {\bibfnamefont {S.}~\bibnamefont
  {Das}}, \bibinfo {author} {\bibfnamefont {C.}~\bibnamefont {Mukhopadhyay}},
  \bibinfo {author} {\bibfnamefont {S.~S.}\ \bibnamefont {Roy}}, \bibinfo
  {author} {\bibfnamefont {S.}~\bibnamefont {Bhattacharya}}, \bibinfo {author}
  {\bibfnamefont {A.}~\bibnamefont {Sen(De)}}, \ and\ \bibinfo {author}
  {\bibfnamefont {U.}~\bibnamefont {Sen}},\ }\href
  {https://dx.doi.org/10.1088/1751-8121/ab741f} {\bibfield  {journal} {\bibinfo
   {journal} {Journal of Physics A: Mathematical and Theoretical}\ }\textbf
  {\bibinfo {volume} {53}},\ \bibinfo {pages} {115301} (\bibinfo {year}
  {2020})}\BibitemShut {NoStop}%
\bibitem [{\citenamefont {Banerjee}\ \emph {et~al.}(2021)\citenamefont
  {Banerjee}, \citenamefont {Sen}, \citenamefont {Srivastava},\ and\
  \citenamefont {Sen}}]{Banerjee21}%
  \BibitemOpen
  \bibfield  {author} {\bibinfo {author} {\bibfnamefont {I.}~\bibnamefont
  {Banerjee}}, \bibinfo {author} {\bibfnamefont {K.}~\bibnamefont {Sen}},
  \bibinfo {author} {\bibfnamefont {C.}~\bibnamefont {Srivastava}}, \ and\
  \bibinfo {author} {\bibfnamefont {U.}~\bibnamefont {Sen}},\ }\bibfield
  {title} {\enquote {\bibinfo {title} {Quantum coherence with incomplete set of
  pointers and corresponding wave-particle duality},}\ }\href
  {https://arxiv.org/abs/2108.05849} {\bibfield  {journal} {\bibinfo  {journal}
  {arXiv:2108.05849}\ } (\bibinfo {year} {2021})}\BibitemShut {NoStop}%
\bibitem [{\citenamefont {Wootters}\ and\ \citenamefont
  {Fields}(1989)}]{Wootters89}%
  \BibitemOpen
  \bibfield  {author} {\bibinfo {author} {\bibfnamefont {W.~K.}\ \bibnamefont
  {Wootters}}\ and\ \bibinfo {author} {\bibfnamefont {B.~D}\ \bibnamefont
  {Fields}},\ }\bibfield  {title} {\enquote {\bibinfo {title} {Optimal
  state-determination by mutually unbiased measurements},}\ }\href {\doibase
  https://doi.org/10.1016/0003-4916(89)90322-9} {\bibfield  {journal} {\bibinfo
   {journal} {Annals of Physics}\ }\textbf {\bibinfo {volume} {191}},\ \bibinfo
  {pages} {363} (\bibinfo {year} {1989})}\BibitemShut {NoStop}%
\bibitem [{\citenamefont {Bandyopadhyay}\ \emph {et~al.}(2002)\citenamefont
  {Bandyopadhyay}, \citenamefont {Boykin}, \citenamefont {Roychowdhury},\ and\
  \citenamefont {Vatan}}]{vwani-somshubhro-mub}%
  \BibitemOpen
  \bibfield  {author} {\bibinfo {author} {\bibfnamefont {S.}~\bibnamefont
  {Bandyopadhyay}}, \bibinfo {author} {\bibfnamefont {P.~O.}\ \bibnamefont
  {Boykin}}, \bibinfo {author} {\bibfnamefont {V.}~\bibnamefont
  {Roychowdhury}}, \ and\ \bibinfo {author} {\bibfnamefont {F.}~\bibnamefont
  {Vatan}},\ }\href {https://doi.org/10.1007/s00453-002-0980-7} {\bibfield
  {journal} {\bibinfo  {journal} {Algorithmica}\ }\textbf {\bibinfo {volume}
  {34}},\ \bibinfo {pages} {512} (\bibinfo {year} {2002})}\BibitemShut
  {NoStop}%
\bibitem [{\citenamefont {Planat}\ \emph {et~al.}(2006)\citenamefont {Planat},
  \citenamefont {Rosu},\ and\ \citenamefont {Perrine}}]{planat06}%
  \BibitemOpen
  \bibfield  {author} {\bibinfo {author} {\bibfnamefont {M.}~\bibnamefont
  {Planat}}, \bibinfo {author} {\bibfnamefont {H.~C.}\ \bibnamefont {Rosu}}, \
  and\ \bibinfo {author} {\bibfnamefont {S.}~\bibnamefont {Perrine}},\
  }\bibfield  {title} {\enquote {\bibinfo {title} {A survey of finite algebraic
  geometrical structures underlying mutually unbiased quantum measurements},}\
  }\href {\doibase 10.1007/s10701-006-9079-3} {\bibfield  {journal} {\bibinfo
  {journal} {Foundations of Physics}\ }\textbf {\bibinfo {volume} {36}},\
  \bibinfo {pages} {1662} (\bibinfo {year} {2006})}\BibitemShut {NoStop}%
\bibitem [{\citenamefont {Bengtsson}(2007)}]{MUB-1}%
  \BibitemOpen
  \bibfield  {author} {\bibinfo {author} {\bibfnamefont {I.}~\bibnamefont
  {Bengtsson}},\ }\bibfield  {title} {\enquote {\bibinfo {title} {Three ways to
  look at mutually unbiased bases},}\ }\href
  {https://doi.org/10.1063/1.2713445} {\bibfield  {journal} {\bibinfo
  {journal} {AIP Conference Proceedings}\ }\textbf {\bibinfo {volume} {889}},\
  \bibinfo {pages} {40} (\bibinfo {year} {2007})}\BibitemShut {NoStop}%
\bibitem [{\citenamefont {Cheng}\ and\ \citenamefont {Hall}(2015)}]{hall}%
  \BibitemOpen
  \bibfield  {author} {\bibinfo {author} {\bibfnamefont {S.}~\bibnamefont
  {Cheng}}\ and\ \bibinfo {author} {\bibfnamefont {M.~J.~W.}\ \bibnamefont
  {Hall}},\ }\bibfield  {title} {\enquote {\bibinfo {title} {Complementarity
  relations for quantum coherence},}\ }\href
  {https://link.aps.org/doi/10.1103/PhysRevA.92.042101} {\bibfield  {journal}
  {\bibinfo  {journal} {Phys. Rev. A}\ }\textbf {\bibinfo {volume} {92}},\
  \bibinfo {pages} {042101} (\bibinfo {year} {2015})}\BibitemShut {NoStop}%
\bibitem [{\citenamefont {Mondal}\ \emph {et~al.}(2017)\citenamefont {Mondal},
  \citenamefont {Pramanik},\ and\ \citenamefont {Pati}}]{pati}%
  \BibitemOpen
  \bibfield  {author} {\bibinfo {author} {\bibfnamefont {D.}~\bibnamefont
  {Mondal}}, \bibinfo {author} {\bibfnamefont {T.}~\bibnamefont {Pramanik}}, \
  and\ \bibinfo {author} {\bibfnamefont {A.}~\bibnamefont {Pati}},\ }\bibfield
  {title} {\enquote {\bibinfo {title} {Nonlocal advantage of quantum
  coherence},}\ }\href {https://link.aps.org/doi/10.1103/PhysRevA.95.010301}
  {\bibfield  {journal} {\bibinfo  {journal} {Physical Review A}\ }\textbf
  {\bibinfo {volume} {95}},\ \bibinfo {pages} {010301(R)} (\bibinfo {year}
  {2017})}\BibitemShut {NoStop}%
\bibitem [{\citenamefont {Mondal}\ and\ \citenamefont
  {Kaszlikowski}(2018)}]{debasis}%
  \BibitemOpen
  \bibfield  {author} {\bibinfo {author} {\bibfnamefont {D.}~\bibnamefont
  {Mondal}}\ and\ \bibinfo {author} {\bibfnamefont {D.}~\bibnamefont
  {Kaszlikowski}},\ }\bibfield  {title} {\enquote {\bibinfo {title}
  {Complementarity relations between quantum steering criteria},}\ }\href
  {https://link.aps.org/doi/10.1103/PhysRevA.98.052330} {\bibfield  {journal}
  {\bibinfo  {journal} {Phys. Rev. A}\ }\textbf {\bibinfo {volume} {98}},\
  \bibinfo {pages} {052330} (\bibinfo {year} {2018})}\BibitemShut {NoStop}%
\bibitem [{\citenamefont {Coffman}\ \emph {et~al.}(2000)\citenamefont
  {Coffman}, \citenamefont {Kundu},\ and\ \citenamefont
  {Wootters}}]{ent-monogamy}%
  \BibitemOpen
  \bibfield  {author} {\bibinfo {author} {\bibfnamefont {V.}~\bibnamefont
  {Coffman}}, \bibinfo {author} {\bibfnamefont {J.}~\bibnamefont {Kundu}}, \
  and\ \bibinfo {author} {\bibfnamefont {W.~K.}\ \bibnamefont {Wootters}},\
  }\bibfield  {title} {\enquote {\bibinfo {title} {Distributed entanglement},}\
  }\href {https://link.aps.org/doi/10.1103/PhysRevA.61.052306} {\bibfield
  {journal} {\bibinfo  {journal} {Phys. Rev. A}\ }\textbf {\bibinfo {volume}
  {61}},\ \bibinfo {pages} {052306} (\bibinfo {year} {2000})}\BibitemShut
  {NoStop}%
\bibitem [{\citenamefont {Koashi}\ and\ \citenamefont
  {Winter}(2004)}]{mono-winter}%
  \BibitemOpen
  \bibfield  {author} {\bibinfo {author} {\bibfnamefont {M.}~\bibnamefont
  {Koashi}}\ and\ \bibinfo {author} {\bibfnamefont {A.}~\bibnamefont
  {Winter}},\ }\bibfield  {title} {\enquote {\bibinfo {title} {Monogamy of
  quantum entanglement and other correlations},}\ }\href
  {https://link.aps.org/doi/10.1103/PhysRevA.69.022309} {\bibfield  {journal}
  {\bibinfo  {journal} {Phys. Rev. A}\ }\textbf {\bibinfo {volume} {69}},\
  \bibinfo {pages} {022309} (\bibinfo {year} {2004})}\BibitemShut {NoStop}%
\bibitem [{\citenamefont {Adesso}\ \emph {et~al.}(2006)\citenamefont {Adesso},
  \citenamefont {Serafini},\ and\ \citenamefont {Illuminati}}]{mono-adesso}%
  \BibitemOpen
  \bibfield  {author} {\bibinfo {author} {\bibfnamefont {G.}~\bibnamefont
  {Adesso}}, \bibinfo {author} {\bibfnamefont {A.}~\bibnamefont {Serafini}}, \
  and\ \bibinfo {author} {\bibfnamefont {F.}~\bibnamefont {Illuminati}},\
  }\bibfield  {title} {\enquote {\bibinfo {title} {Multipartite entanglement in
  three-mode {G}aussian states of continuous-variable systems: Quantification,
  sharing structure, and decoherence},}\ }\href
  {https://link.aps.org/doi/10.1103/PhysRevA.73.032345} {\bibfield  {journal}
  {\bibinfo  {journal} {Phys. Rev. A}\ }\textbf {\bibinfo {volume} {73}},\
  \bibinfo {pages} {032345} (\bibinfo {year} {2006})}\BibitemShut {NoStop}%
\bibitem [{\citenamefont {Osborne}\ and\ \citenamefont
  {Verstraete}(2006)}]{mono-osborne}%
  \BibitemOpen
  \bibfield  {author} {\bibinfo {author} {\bibfnamefont {T.~J.}\ \bibnamefont
  {Osborne}}\ and\ \bibinfo {author} {\bibfnamefont {F.}~\bibnamefont
  {Verstraete}},\ }\bibfield  {title} {\enquote {\bibinfo {title} {General
  monogamy inequality for bipartite qubit entanglement},}\ }\href
  {https://link.aps.org/doi/10.1103/PhysRevLett.96.220503} {\bibfield
  {journal} {\bibinfo  {journal} {Phys. Rev. Lett.}\ }\textbf {\bibinfo
  {volume} {96}},\ \bibinfo {pages} {220503} (\bibinfo {year}
  {2006})}\BibitemShut {NoStop}%
\bibitem [{\citenamefont {Hayashi}\ and\ \citenamefont
  {Chen}(2011)}]{mono-chen}%
  \BibitemOpen
  \bibfield  {author} {\bibinfo {author} {\bibfnamefont {M.}~\bibnamefont
  {Hayashi}}\ and\ \bibinfo {author} {\bibfnamefont {L.}~\bibnamefont {Chen}},\
  }\bibfield  {title} {\enquote {\bibinfo {title} {Weaker entanglement between
  two parties guarantees stronger entanglement with a third party},}\ }\href
  {https://link.aps.org/doi/10.1103/PhysRevA.84.012325} {\bibfield  {journal}
  {\bibinfo  {journal} {Phys. Rev. A}\ }\textbf {\bibinfo {volume} {84}},\
  \bibinfo {pages} {012325} (\bibinfo {year} {2011})}\BibitemShut {NoStop}%
\bibitem [{\citenamefont {Lee}\ and\ \citenamefont {Park}(2009)}]{mono-lee}%
  \BibitemOpen
  \bibfield  {author} {\bibinfo {author} {\bibfnamefont {S.}~\bibnamefont
  {Lee}}\ and\ \bibinfo {author} {\bibfnamefont {J.}~\bibnamefont {Park}},\
  }\bibfield  {title} {\enquote {\bibinfo {title} {Monogamy of entanglement and
  teleportation capability},}\ }\href
  {https://link.aps.org/doi/10.1103/PhysRevA.79.054309} {\bibfield  {journal}
  {\bibinfo  {journal} {Phys. Rev. A}\ }\textbf {\bibinfo {volume} {79}},\
  \bibinfo {pages} {054309} (\bibinfo {year} {2009})}\BibitemShut {NoStop}%
\bibitem [{\citenamefont {Prabhu}\ \emph {et~al.}(2012)\citenamefont {Prabhu},
  \citenamefont {Pati}, \citenamefont {Sen~(De)},\ and\ \citenamefont
  {Sen}}]{discord-monogamy}%
  \BibitemOpen
  \bibfield  {author} {\bibinfo {author} {\bibfnamefont {R.}~\bibnamefont
  {Prabhu}}, \bibinfo {author} {\bibfnamefont {A.~K.}\ \bibnamefont {Pati}},
  \bibinfo {author} {\bibfnamefont {A.}~\bibnamefont {Sen~(De)}}, \ and\
  \bibinfo {author} {\bibfnamefont {U.}~\bibnamefont {Sen}},\ }\bibfield
  {title} {\enquote {\bibinfo {title} {Conditions for monogamy of quantum
  correlations: {G}reenberger-{H}orne-{Z}eilinger versus {W} states},}\ }\href
  {https://link.aps.org/doi/10.1103/PhysRevA.85.040102} {\bibfield  {journal}
  {\bibinfo  {journal} {Phys. Rev. A}\ }\textbf {\bibinfo {volume} {85}},\
  \bibinfo {pages} {040102} (\bibinfo {year} {2012})}\BibitemShut {NoStop}%
\bibitem [{\citenamefont {Guo}\ \emph {et~al.}(2021)\citenamefont {Guo},
  \citenamefont {Huang},\ and\ \citenamefont {Zhang}}]{mono-discord-2}%
  \BibitemOpen
  \bibfield  {author} {\bibinfo {author} {\bibfnamefont {Y.}~\bibnamefont
  {Guo}}, \bibinfo {author} {\bibfnamefont {L.}~\bibnamefont {Huang}}, \ and\
  \bibinfo {author} {\bibfnamefont {Y.}~\bibnamefont {Zhang}},\ }\bibfield
  {title} {\enquote {\bibinfo {title} {Monogamy of quantum discord},}\ }\href
  {https://doi.org/10.1088/2058-9565/ac26b0} {\bibfield  {journal} {\bibinfo
  {journal} {Quantum Science and Technology}\ }\textbf {\bibinfo {volume}
  {6}},\ \bibinfo {pages} {045028} (\bibinfo {year} {2021})}\BibitemShut
  {NoStop}%
\bibitem [{\citenamefont {Horodecki}\ \emph {et~al.}(2009)\citenamefont
  {Horodecki}, \citenamefont {Horodecki}, \citenamefont {Horodecki},\ and\
  \citenamefont {Horodecki}}]{ent-review-1}%
  \BibitemOpen
  \bibfield  {author} {\bibinfo {author} {\bibfnamefont {R.}~\bibnamefont
  {Horodecki}}, \bibinfo {author} {\bibfnamefont {P.}~\bibnamefont
  {Horodecki}}, \bibinfo {author} {\bibfnamefont {M.}~\bibnamefont
  {Horodecki}}, \ and\ \bibinfo {author} {\bibfnamefont {K.}~\bibnamefont
  {Horodecki}},\ }\bibfield  {title} {\enquote {\bibinfo {title} {Quantum
  entanglement},}\ }\href {https://link.aps.org/doi/10.1103/RevModPhys.81.865}
  {\bibfield  {journal} {\bibinfo  {journal} {Rev. Mod. Phys.}\ }\textbf
  {\bibinfo {volume} {81}},\ \bibinfo {pages} {865} (\bibinfo {year}
  {2009})}\BibitemShut {NoStop}%
\bibitem [{\citenamefont {Modi}\ \emph {et~al.}(2012)\citenamefont {Modi},
  \citenamefont {Brodutch}, \citenamefont {Cable}, \citenamefont {Paterek},\
  and\ \citenamefont {Vedral}}]{kavan-discord-review}%
  \BibitemOpen
  \bibfield  {author} {\bibinfo {author} {\bibfnamefont {K.}~\bibnamefont
  {Modi}}, \bibinfo {author} {\bibfnamefont {A.}~\bibnamefont {Brodutch}},
  \bibinfo {author} {\bibfnamefont {H.}~\bibnamefont {Cable}}, \bibinfo
  {author} {\bibfnamefont {T.}~\bibnamefont {Paterek}}, \ and\ \bibinfo
  {author} {\bibfnamefont {V.}~\bibnamefont {Vedral}},\ }\bibfield  {title}
  {\enquote {\bibinfo {title} {The classical-quantum boundary for correlations:
  Discord and related measures},}\ }\href
  {https://link.aps.org/doi/10.1103/RevModPhys.84.1655} {\bibfield  {journal}
  {\bibinfo  {journal} {Rev. Mod. Phys.}\ }\textbf {\bibinfo {volume} {84}},\
  \bibinfo {pages} {1655} (\bibinfo {year} {2012})}\BibitemShut {NoStop}%
\bibitem [{\citenamefont {Bera}\ \emph {et~al.}(2017)\citenamefont {Bera},
  \citenamefont {Das}, \citenamefont {Sadhukhan}, \citenamefont {Roy},
  \citenamefont {Sen(De)},\ and\ \citenamefont {Sen}}]{discord-review-1}%
  \BibitemOpen
  \bibfield  {author} {\bibinfo {author} {\bibfnamefont {A.}~\bibnamefont
  {Bera}}, \bibinfo {author} {\bibfnamefont {T.}~\bibnamefont {Das}}, \bibinfo
  {author} {\bibfnamefont {D.}~\bibnamefont {Sadhukhan}}, \bibinfo {author}
  {\bibfnamefont {S.~S.}\ \bibnamefont {Roy}}, \bibinfo {author} {\bibfnamefont
  {A.}~\bibnamefont {Sen(De)}}, \ and\ \bibinfo {author} {\bibfnamefont
  {U.}~\bibnamefont {Sen}},\ }\bibfield  {title} {\enquote {\bibinfo {title}
  {Quantum discord and its allies: a review of recent progress},}\ }\href
  {https://dx.doi.org/10.1088/1361-6633/aa872f} {\bibfield  {journal} {\bibinfo
   {journal} {Reports on Progress in Physics}\ }\textbf {\bibinfo {volume}
  {81}},\ \bibinfo {pages} {024001} (\bibinfo {year} {2017})}\BibitemShut
  {NoStop}%
\bibitem [{\citenamefont {Kim}\ \emph {et~al.}(2012)\citenamefont {Kim},
  \citenamefont {Gour},\ and\ \citenamefont {Sanders}}]{sanders-ent-review}%
  \BibitemOpen
  \bibfield  {author} {\bibinfo {author} {\bibfnamefont {J.~S.}\ \bibnamefont
  {Kim}}, \bibinfo {author} {\bibfnamefont {G.}~\bibnamefont {Gour}}, \ and\
  \bibinfo {author} {\bibfnamefont {B.~C.}\ \bibnamefont {Sanders}},\
  }\bibfield  {title} {\enquote {\bibinfo {title} {Limitations to sharing
  entanglement},}\ }\href {https://doi.org/10.1080/00107514.2012.725560}
  {\bibfield  {journal} {\bibinfo  {journal} {Contemporary Physics}\ }\textbf
  {\bibinfo {volume} {53}},\ \bibinfo {pages} {417} (\bibinfo {year}
  {2012})}\BibitemShut {NoStop}%
\bibitem [{\citenamefont {Dhar}\ \emph {et~al.}(2017)\citenamefont {Dhar},
  \citenamefont {Pal}, \citenamefont {Rakshit}, \citenamefont {Sen(De)},\ and\
  \citenamefont {Sen}}]{discord-monogamy-review-ujjwal}%
  \BibitemOpen
  \bibfield  {author} {\bibinfo {author} {\bibfnamefont {H.~S.}\ \bibnamefont
  {Dhar}}, \bibinfo {author} {\bibfnamefont {A.~K.}\ \bibnamefont {Pal}},
  \bibinfo {author} {\bibfnamefont {D.}~\bibnamefont {Rakshit}}, \bibinfo
  {author} {\bibfnamefont {A.}~\bibnamefont {Sen(De)}}, \ and\ \bibinfo
  {author} {\bibfnamefont {U.}~\bibnamefont {Sen}},\ }\enquote {\bibinfo
  {title} {Monogamy of quantum correlations-a review},}\ in\ \href
  {https://doi.org/10.1007/978-3-319-53412-1} {\emph {\bibinfo {booktitle}
  {Lectures on General Quantum Correlations and their Applications}}},\
  \bibinfo {editor} {edited by\ \bibinfo {editor} {\bibfnamefont {F.~F.}\
  \bibnamefont {Fanchini}}, \bibinfo {editor} {\bibfnamefont {D.}~\bibnamefont
  {Pinto}}, \ and\ \bibinfo {editor} {\bibfnamefont {G.}~\bibnamefont
  {Adesso}}}\ (\bibinfo  {publisher} {Springer International Publishing},\
  \bibinfo {year} {2017})\ pp.\ \bibinfo {pages} {23--64}\BibitemShut {NoStop}%
\bibitem [{\citenamefont {Streltsov}\ \emph {et~al.}(2015)\citenamefont
  {Streltsov}, \citenamefont {Singh}, \citenamefont {Dhar}, \citenamefont
  {Bera},\ and\ \citenamefont {Adesso}}]{coherence-streltsov}%
  \BibitemOpen
  \bibfield  {author} {\bibinfo {author} {\bibfnamefont {A.}~\bibnamefont
  {Streltsov}}, \bibinfo {author} {\bibfnamefont {U.}~\bibnamefont {Singh}},
  \bibinfo {author} {\bibfnamefont {H.~S.}\ \bibnamefont {Dhar}}, \bibinfo
  {author} {\bibfnamefont {M.~N.}\ \bibnamefont {Bera}}, \ and\ \bibinfo
  {author} {\bibfnamefont {G.}~\bibnamefont {Adesso}},\ }\bibfield  {title}
  {\enquote {\bibinfo {title} {Measuring quantum coherence with
  entanglement},}\ }\href
  {https://link.aps.org/doi/10.1103/PhysRevLett.115.020403} {\bibfield
  {journal} {\bibinfo  {journal} {Phys. Rev. Lett.}\ }\textbf {\bibinfo
  {volume} {115}},\ \bibinfo {pages} {020403} (\bibinfo {year}
  {2015})}\BibitemShut {NoStop}%
\bibitem [{\citenamefont {Hu}\ \emph {et~al.}(2018)\citenamefont {Hu},
  \citenamefont {Wang},\ and\ \citenamefont {Fan}}]{bell-naqc}%
  \BibitemOpen
  \bibfield  {author} {\bibinfo {author} {\bibfnamefont {M-L.}\ \bibnamefont
  {Hu}}, \bibinfo {author} {\bibfnamefont {X-M.}\ \bibnamefont {Wang}}, \ and\
  \bibinfo {author} {\bibfnamefont {H.}~\bibnamefont {Fan}},\ }\bibfield
  {title} {\enquote {\bibinfo {title} {Hierarchy of the nonlocal advantage of
  quantum coherence and {B}ell nonlocality},}\ }\href
  {https://link.aps.org/doi/10.1103/PhysRevA.98.032317} {\bibfield  {journal}
  {\bibinfo  {journal} {Phys. Rev. A}\ }\textbf {\bibinfo {volume} {98}},\
  \bibinfo {pages} {032317} (\bibinfo {year} {2018})}\BibitemShut {NoStop}%
\bibitem [{\citenamefont {Ac\'{\i}n}\ \emph {et~al.}(2000)\citenamefont
  {Ac\'{\i}n}, \citenamefont {Andrianov}, \citenamefont {Costa}, \citenamefont
  {Jan\'e}, \citenamefont {Latorre},\ and\ \citenamefont
  {Tarrach}}]{three-pure-1}%
  \BibitemOpen
  \bibfield  {author} {\bibinfo {author} {\bibfnamefont {A.}~\bibnamefont
  {Ac\'{\i}n}}, \bibinfo {author} {\bibfnamefont {A.}~\bibnamefont
  {Andrianov}}, \bibinfo {author} {\bibfnamefont {L.}~\bibnamefont {Costa}},
  \bibinfo {author} {\bibfnamefont {E.}~\bibnamefont {Jan\'e}}, \bibinfo
  {author} {\bibfnamefont {J.~I.}\ \bibnamefont {Latorre}}, \ and\ \bibinfo
  {author} {\bibfnamefont {R.}~\bibnamefont {Tarrach}},\ }\bibfield  {title}
  {\enquote {\bibinfo {title} {Generalized {S}chmidt decomposition and
  classification of three-quantum-bit states},}\ }\href
  {https://link.aps.org/doi/10.1103/PhysRevLett.85.1560} {\bibfield  {journal}
  {\bibinfo  {journal} {Phys. Rev. Lett.}\ }\textbf {\bibinfo {volume} {85}},\
  \bibinfo {pages} {1560} (\bibinfo {year} {2000})}\BibitemShut {NoStop}%
\bibitem [{\citenamefont {D\"ur}\ \emph {et~al.}(2000)\citenamefont {D\"ur},
  \citenamefont {Vidal},\ and\ \citenamefont {Cirac}}]{three-pure-2}%
  \BibitemOpen
  \bibfield  {author} {\bibinfo {author} {\bibfnamefont {W.}~\bibnamefont
  {D\"ur}}, \bibinfo {author} {\bibfnamefont {G.}~\bibnamefont {Vidal}}, \ and\
  \bibinfo {author} {\bibfnamefont {J.~I.}\ \bibnamefont {Cirac}},\ }\bibfield
  {title} {\enquote {\bibinfo {title} {Three qubits can be entangled in two
  inequivalent ways},}\ }\href
  {https://link.aps.org/doi/10.1103/PhysRevA.62.062314} {\bibfield  {journal}
  {\bibinfo  {journal} {Phys. Rev. A}\ }\textbf {\bibinfo {volume} {62}},\
  \bibinfo {pages} {062314} (\bibinfo {year} {2000})}\BibitemShut {NoStop}%
\bibitem [{\citenamefont {Ac\'{\i}n}\ \emph {et~al.}(2001)\citenamefont
  {Ac\'{\i}n}, \citenamefont {Bru\ss{}}, \citenamefont {Lewenstein},\ and\
  \citenamefont {Sanpera}}]{three-pure-3}%
  \BibitemOpen
  \bibfield  {author} {\bibinfo {author} {\bibfnamefont {A.}~\bibnamefont
  {Ac\'{\i}n}}, \bibinfo {author} {\bibfnamefont {D.}~\bibnamefont {Bru\ss{}}},
  \bibinfo {author} {\bibfnamefont {M.}~\bibnamefont {Lewenstein}}, \ and\
  \bibinfo {author} {\bibfnamefont {A.}~\bibnamefont {Sanpera}},\ }\bibfield
  {title} {\enquote {\bibinfo {title} {Classification of mixed three-qubit
  states},}\ }\href {https://link.aps.org/doi/10.1103/PhysRevLett.87.040401}
  {\bibfield  {journal} {\bibinfo  {journal} {Phys. Rev. Lett.}\ }\textbf
  {\bibinfo {volume} {87}},\ \bibinfo {pages} {040401} (\bibinfo {year}
  {2001})}\BibitemShut {NoStop}%
\bibitem [{\citenamefont {Peres}(1996)}]{peres-PPT}%
  \BibitemOpen
  \bibfield  {author} {\bibinfo {author} {\bibfnamefont {A.}~\bibnamefont
  {Peres}},\ }\bibfield  {title} {\enquote {\bibinfo {title} {Separability
  criterion for density matrices},}\ }\href
  {https://link.aps.org/doi/10.1103/PhysRevLett.77.1413} {\bibfield  {journal}
  {\bibinfo  {journal} {Phys. Rev. Lett.}\ }\textbf {\bibinfo {volume} {77}},\
  \bibinfo {pages} {1413} (\bibinfo {year} {1996})}\BibitemShut {NoStop}%
\bibitem [{\citenamefont {Horodecki}\ \emph {et~al.}(1996)\citenamefont
  {Horodecki}, \citenamefont {P.},\ and\ \citenamefont
  {Horodecki}}]{horodecki-PPT}%
  \BibitemOpen
  \bibfield  {author} {\bibinfo {author} {\bibfnamefont {M.}~\bibnamefont
  {Horodecki}}, \bibinfo {author} {\bibfnamefont {Horodecki}\ \bibnamefont
  {P.}}, \ and\ \bibinfo {author} {\bibfnamefont {R.}~\bibnamefont
  {Horodecki}},\ }\bibfield  {title} {\enquote {\bibinfo {title} {Separability
  of mixed states: necessary and sufficient conditions},}\ }\href
  {https://www.sciencedirect.com/science/article/pii/S0375960196007062}
  {\bibfield  {journal} {\bibinfo  {journal} {Physics Letters A}\ }\textbf
  {\bibinfo {volume} {223}},\ \bibinfo {pages} {1} (\bibinfo {year}
  {1996})}\BibitemShut {NoStop}%
\bibitem [{\citenamefont {Wiseman}\ \emph {et~al.}(2007)\citenamefont
  {Wiseman}, \citenamefont {Jones},\ and\ \citenamefont
  {Doherty}}]{werner-different-range}%
  \BibitemOpen
  \bibfield  {author} {\bibinfo {author} {\bibfnamefont {H.~M.}\ \bibnamefont
  {Wiseman}}, \bibinfo {author} {\bibfnamefont {S.~J.}\ \bibnamefont {Jones}},
  \ and\ \bibinfo {author} {\bibfnamefont {A.~C.}\ \bibnamefont {Doherty}},\
  }\bibfield  {title} {\enquote {\bibinfo {title} {Steering, entanglement,
  nonlocality, and the {E}instein-{P}odolsky-{R}osen paradox},}\ }\href
  {https://link.aps.org/doi/10.1103/PhysRevLett.98.140402} {\bibfield
  {journal} {\bibinfo  {journal} {Phys. Rev. Lett.}\ }\textbf {\bibinfo
  {volume} {98}},\ \bibinfo {pages} {140402} (\bibinfo {year}
  {2007})}\BibitemShut {NoStop}%
\bibitem [{\citenamefont {Toner}\ and\ \citenamefont
  {Verstraete}(2006)}]{toner-bell-2}%
  \BibitemOpen
  \bibfield  {author} {\bibinfo {author} {\bibfnamefont {B.}~\bibnamefont
  {Toner}}\ and\ \bibinfo {author} {\bibfnamefont {F.}~\bibnamefont
  {Verstraete}},\ }\bibfield  {title} {\enquote {\bibinfo {title} {Monogamy of
  {B}ell correlations and {T}sirelson's bound},}\ }\href
  {https://doi.org/10.48550/arXiv.quant-ph/0611001} {\bibfield  {journal}
  {\bibinfo  {journal} {arXiv:quant-ph/0611001}\ } (\bibinfo {year}
  {2006})}\BibitemShut {NoStop}%
\bibitem [{\citenamefont {Toner}(2009)}]{toner-bell-1}%
  \BibitemOpen
  \bibfield  {author} {\bibinfo {author} {\bibfnamefont {B.}~\bibnamefont
  {Toner}},\ }\bibfield  {title} {\enquote {\bibinfo {title} {Monogamy of
  non-local quantum correlations},}\ }\href
  {https://doi.org/10.1098/rspa.2008.0149} {\bibfield  {journal} {\bibinfo
  {journal} {Proc. R. Soc. A.}\ }\textbf {\bibinfo {volume} {465}},\ \bibinfo
  {pages} {59} (\bibinfo {year} {2009})}\BibitemShut {NoStop}%
\bibitem [{\citenamefont {Kurzy\ifmmode~\acute{n}\else \'{n}\fi{}ski}\ \emph
  {et~al.}(2011)\citenamefont {Kurzy\ifmmode~\acute{n}\else \'{n}\fi{}ski},
  \citenamefont {Paterek}, \citenamefont {Ramanathan}, \citenamefont
  {Laskowski},\ and\ \citenamefont
  {Kaszlikowski}}]{kazlikowski-bell-monogamy-1}%
  \BibitemOpen
  \bibfield  {author} {\bibinfo {author} {\bibfnamefont {P.}~\bibnamefont
  {Kurzy\ifmmode~\acute{n}\else \'{n}\fi{}ski}}, \bibinfo {author}
  {\bibfnamefont {T.}~\bibnamefont {Paterek}}, \bibinfo {author} {\bibfnamefont
  {R.}~\bibnamefont {Ramanathan}}, \bibinfo {author} {\bibfnamefont
  {W.}~\bibnamefont {Laskowski}}, \ and\ \bibinfo {author} {\bibfnamefont
  {D.}~\bibnamefont {Kaszlikowski}},\ }\bibfield  {title} {\enquote {\bibinfo
  {title} {Correlation complementarity yields bell monogamy relations},}\
  }\href {https://link.aps.org/doi/10.1103/PhysRevLett.106.180402} {\bibfield
  {journal} {\bibinfo  {journal} {Phys. Rev. Lett.}\ }\textbf {\bibinfo
  {volume} {106}},\ \bibinfo {pages} {180402} (\bibinfo {year}
  {2011})}\BibitemShut {NoStop}%
\bibitem [{\citenamefont {Tran}\ \emph {et~al.}(2018)\citenamefont {Tran},
  \citenamefont {Ramanathan}, \citenamefont {McKague}, \citenamefont
  {Kaszlikowski},\ and\ \citenamefont {Paterek}}]{kazlikowski-bell-monogamy-2}%
  \BibitemOpen
  \bibfield  {author} {\bibinfo {author} {\bibfnamefont {M.~C.}\ \bibnamefont
  {Tran}}, \bibinfo {author} {\bibfnamefont {R.}~\bibnamefont {Ramanathan}},
  \bibinfo {author} {\bibfnamefont {M.}~\bibnamefont {McKague}}, \bibinfo
  {author} {\bibfnamefont {D.}~\bibnamefont {Kaszlikowski}}, \ and\ \bibinfo
  {author} {\bibfnamefont {T.}~\bibnamefont {Paterek}},\ }\bibfield  {title}
  {\enquote {\bibinfo {title} {Bell monogamy relations in arbitrary qubit
  networks},}\ }\href {https://link.aps.org/doi/10.1103/PhysRevA.98.052325}
  {\bibfield  {journal} {\bibinfo  {journal} {Phys. Rev. A}\ }\textbf {\bibinfo
  {volume} {98}},\ \bibinfo {pages} {052325} (\bibinfo {year}
  {2018})}\BibitemShut {NoStop}%
\bibitem [{\citenamefont {Prabhu}\ \emph {et~al.}(2013)\citenamefont {Prabhu},
  \citenamefont {Pati}, \citenamefont {Sen(De)},\ and\ \citenamefont
  {Sen}}]{prabhu-dense-coding}%
  \BibitemOpen
  \bibfield  {author} {\bibinfo {author} {\bibfnamefont {R.}~\bibnamefont
  {Prabhu}}, \bibinfo {author} {\bibfnamefont {A.~K.}\ \bibnamefont {Pati}},
  \bibinfo {author} {\bibfnamefont {A.}~\bibnamefont {Sen(De)}}, \ and\
  \bibinfo {author} {\bibfnamefont {U.}~\bibnamefont {Sen}},\ }\bibfield
  {title} {\enquote {\bibinfo {title} {Exclusion principle for quantum dense
  coding},}\ }\href {https://link.aps.org/doi/10.1103/PhysRevA.87.052319}
  {\bibfield  {journal} {\bibinfo  {journal} {Phys. Rev. A}\ }\textbf {\bibinfo
  {volume} {87}},\ \bibinfo {pages} {052319} (\bibinfo {year}
  {2013})}\BibitemShut {NoStop}%
\bibitem [{\citenamefont {Greenberger}\ \emph {et~al.}(1989)\citenamefont
  {Greenberger}, \citenamefont {Horne},\ and\ \citenamefont
  {Zeilinger}}]{ghz-class-1}%
  \BibitemOpen
  \bibfield  {author} {\bibinfo {author} {\bibfnamefont {D.~M.}\ \bibnamefont
  {Greenberger}}, \bibinfo {author} {\bibfnamefont {M.~A.}\ \bibnamefont
  {Horne}}, \ and\ \bibinfo {author} {\bibfnamefont {A.}~\bibnamefont
  {Zeilinger}},\ }\enquote {\bibinfo {title} {Going beyond bell's theorem},}\
  in\ \href {https://doi.org/10.1007/978-94-017-0849-4_10} {\emph {\bibinfo
  {booktitle} {Bell's Theorem, Quantum Theory and Conceptions of the
  Universe}}},\ \bibinfo {editor} {edited by\ \bibinfo {editor} {\bibfnamefont
  {M.}~\bibnamefont {Kafatos}}}\ (\bibinfo  {publisher} {Springer
  Netherlands},\ \bibinfo {address} {Dordrecht},\ \bibinfo {year} {1989})\ pp.\
  \bibinfo {pages} {69--72}\BibitemShut {NoStop}%
\bibitem [{\citenamefont {Mermin}(1990)}]{ghz-class-2}%
  \BibitemOpen
  \bibfield  {author} {\bibinfo {author} {\bibfnamefont {N.~D.}\ \bibnamefont
  {Mermin}},\ }\bibfield  {title} {\enquote {\bibinfo {title} {Extreme quantum
  entanglement in a superposition of macroscopically distinct states},}\ }\href
  {https://link.aps.org/doi/10.1103/PhysRevLett.65.1838} {\bibfield  {journal}
  {\bibinfo  {journal} {Phys. Rev. Lett.}\ }\textbf {\bibinfo {volume} {65}},\
  \bibinfo {pages} {1838} (\bibinfo {year} {1990})}\BibitemShut {NoStop}%
\bibitem [{\citenamefont {Zeilinger}\ \emph {et~al.}(1992)\citenamefont
  {Zeilinger}, \citenamefont {Horne},\ and\ \citenamefont
  {Greenberger}}]{w-class-1}%
  \BibitemOpen
  \bibfield  {author} {\bibinfo {author} {\bibfnamefont {A.}~\bibnamefont
  {Zeilinger}}, \bibinfo {author} {\bibfnamefont {M.~A.}\ \bibnamefont
  {Horne}}, \ and\ \bibinfo {author} {\bibfnamefont {D.~M.}\ \bibnamefont
  {Greenberger}},\ }\bibfield  {title} {\enquote {\bibinfo {title}
  {Higher-order quantum entanglement},}\ }in\ \href@noop {} {\emph {\bibinfo
  {booktitle} {Squeezed States and Uncertainty Relations}}}\ (\bibinfo {year}
  {1992})\ pp.\ \bibinfo {pages} {73--81}\BibitemShut {NoStop}%
\bibitem [{\citenamefont {Uola}\ \emph {et~al.}(2020)\citenamefont {Uola},
  \citenamefont {Costa}, \citenamefont {Nguyen},\ and\ \citenamefont
  {G\"uhne}}]{steer-rev}%
  \BibitemOpen
  \bibfield  {author} {\bibinfo {author} {\bibfnamefont {R.}~\bibnamefont
  {Uola}}, \bibinfo {author} {\bibfnamefont {A.~C.~S.}\ \bibnamefont {Costa}},
  \bibinfo {author} {\bibfnamefont {H.~C.}\ \bibnamefont {Nguyen}}, \ and\
  \bibinfo {author} {\bibfnamefont {O.}~\bibnamefont {G\"uhne}},\ }\bibfield
  {title} {\enquote {\bibinfo {title} {Quantum steering},}\ }\href
  {https://link.aps.org/doi/10.1103/RevModPhys.92.015001} {\bibfield  {journal}
  {\bibinfo  {journal} {Rev. Mod. Phys.}\ }\textbf {\bibinfo {volume} {92}},\
  \bibinfo {pages} {015001} (\bibinfo {year} {2020})}\BibitemShut {NoStop}%
\end{thebibliography}%

\end{document}